\pgfplotsset{compat=1.5}
\newcommand{\useMarkHighlights}{1}
\definecolor{colorTODO}{RGB}{255,140,255}
\definecolor{colorCHANGED}{RGB}{140,255,255}
\newcolumntype{L}{>{$}l<{$}}
\newcolumntype{P}[1]{>{\centering\arraybackslash$}p{#1}<{$}}
\newcolumntype{C}[1]{>{\centering\arraybackslash}p{#1}}
\newcolumntype{M}[1]{>{\centering\arraybackslash}m{#1}}
\newcommand\Tstrut{\rule{0pt}{2.6ex}}         
\newcommand\Bstrut{\rule[-0.9ex]{0pt}{0pt}}   
\newtheorem{definition}{Definition}
\newtheorem{theorem}{Theorem}
\newtheorem{thmdef}[theorem]{Definition}
\newtheorem{lemma}[theorem]{Lemma}
\numberwithin{theorem}{section}
\newcommand{\thmend}{\strut\hfill\ensuremath{\lrcorner}}
\newenvironment{define}[2]
{\begin{thmdef}[#1]\label{#2}$\quad$\newline\normalfont}
{\thmend\end{thmdef}}
\newcommand{\algfontsize}{\scriptsize}
\newcommand{\algwhiledo}[2]{\textbf{while} #1 \textbf{do} #2\\}
\definecolor{ELsel}{gray}{0.9}
\let\existstemp\exists
\let\foralltemp\forall
\renewcommand*{\exists}{\existstemp\mkern2mu}
\renewcommand*{\forall}{\foralltemp\mkern2mu}
\newcommand{\bag}[1]{\mathcal{B}({#1})} 
\newcommand{\set}[1]{\left\{\, {#1} \,\right\}}
\newcommand{\mathlist}[1]{\left[\, {#1} \,\right]}
\newcommand{\mset}[1]{\mathlist{#1}}
\newcommand{\seq}[1]{\left\langle\, {#1} \,\right\rangle}
\newcommand{\setc}[2]{\left\{\, {#1} \;\middle|\; {#2} \,\right\}}
\newcommand{\listc}[2]{\left[\, {#1} \;\middle|\; {#2} \,\right]}
\newcommand{\msetc}[2]{\listc{#1}{#2}}
\newcommand{\seqc}[2]{\left\langle\, {#1} \;\middle|\; {#2} \,\right\rangle}
\newcommand{\seqcat}{\mathbin{\cdot}}
\newcommand{\seqshuffle}{\mathbin{\diamond}}
\newcommand{\setseq}[1]{{#1}^*}
\newcommand{\len}[1]{\left|{#1}\right|}
\newcommand{\project}[2]{#1\mathbin{\upharpoonright}_{#2}}
\newcommand{\seqproject}[2]{#1\mathbin{\upharpoonright}^*_{#2}}
\newcommand{\limpl}{\Rightarrow}
\newcommand{\funn}[1]{\mathit{#1}}
\newcommand{\fun}[2]{\mathit{#1}({#2})}
\newcommand{\lsep}{,\;}
\def\mklist#1{%
 \gdef\firstelement{1}
 \foreach \e in {#1}{%
   \ifnum\firstelement=0\lsep\fi\e%
   \gdef\firstelement{0}%
 }
}
\newcommand{\actAlph}{\mathds{A}}
\newcommand{\symTrace}{t}
\newcommand{\symEventLog}{L}
\newcommand{\symEmptyTrace}{\varepsilon}
\newcommand{\actSilent}{\tau}
\newcommand{\symTree}{P}
\newcommand{\treeLan}{\mathcal{L}}
\newcommand{\treeSetOp}{\bigotimes}
\newcommand{\treeOp}{\mathord{\otimes}}
\newcommand{\treeOpExChoice}{\mathord{\times}}
\newcommand{\treeOpXor}{\treeOpExChoice}
\newcommand{\treeOpSeq}{\mathord{\rightarrow}}
\newcommand{\treeOpLoop}{\mathord{\circlearrowleft}}
\newcommand{\treeOpParallel}{\mathord{\wedge}}
\newcommand{\treeOpPar}{\treeOpParallel}
\newcommand{\treeOpCompOr}{\mathord{\triangledown}}
\newcommand{\treeOpRecurOr}{\mathord{\vartriangle}}
\newcommand{\treeOpPlaceholder}{\mathord{?}}
\newcommand{\cf}[1]{{\scriptsize\texttt{#1}}}
\newcommand{\symContextPath}{\mathrm{C}}
\title{Recursion Aware Modeling and Discovery\\ For Hierarchical Software Event Log Analysis (Ext.)\\
\vspace*{-0.2em}
{\Large Technical Report version with guarantee proofs for the discovery algorithms}
\vspace*{-0.2em}}
\author{
\IEEEauthorblockN{Maikel Leemans}
\IEEEauthorblockA{
Eindhoven University of Technology\\
Eindhoven, The Netherlands\\
Email: m.leemans@tue.nl
\vspace*{-2em}} \and 
\IEEEauthorblockN{Wil M.
P.
van der Aalst}
\IEEEauthorblockA{
Eindhoven University of Technology\\
Eindhoven, The Netherlands\\
Email: W.M.P.v.d.Aalst@tue.nl
\vspace*{-2em}} \and 
\IEEEauthorblockN{Mark G.
J.
van den Brand}
\IEEEauthorblockA{
Eindhoven University of Technology\\
Eindhoven, The Netherlands\\
Email: M.G.J.v.d.Brand@tue.nl
\vspace*{-2em}}
}
\begin{document}

\maketitle

\begin{abstract}
This extended paper presents 1) a novel hierarchy and recursion extension to the process tree model; and 2) the first, recursion aware process model discovery technique that leverages hierarchical information in event logs, typically available for software systems.
This technique allows us to analyze the operational processes of software systems under real-life conditions at multiple levels of granularity.
The work can be positioned in-between reverse engineering and process mining.
An implementation of the proposed approach is available as a ProM plugin.
Experimental results based on real-life (software) event logs demonstrate the feasibility and usefulness of the approach and show the huge potential to speed up discovery by exploiting the available hierarchy.
\end{abstract}

\begin{IEEEkeywords}
Reverse Engineering; Process Mining; Recursion Aware Discovery; Event Log; Hierarchical Event Log; Process Discovery; Hierarchical Discovery; Hierarchical Modeling
\end{IEEEkeywords}

\section{Introduction}

System comprehension, analysis, maintenance, and evolution are largely dependent on information regarding the structure, behavior, operation, and usage of software systems.
To understand the operation and usage of a system, one has to observe and study the system ``on the run'', in its natural, real-life production environment.
To understand and maintain the (legacy) behavior when design and documentation are missing or outdated, one can observe and study the system in a controlled environment using, for example, testing techniques.
In both cases, advanced algorithms and tools are needed to support a model driven reverse engineering and analysis of the behavior, operation, and usage.
Such tools should be able to support the analysis of performance (timing), frequency (usage), conformance and reliability in the context of a behavioral backbone model that is expressive, precise and fits the actual system.
This way, one obtains a reliable and accurate understanding of the behavior, operation, and usage of the system, both at a high-level and a fine-grained level.

The above design criteria make \emph{process mining} a good candidate for the analysis of the actual \emph{software behavior}.
Process mining techniques provide a powerful and mature way to discover formal process models and analyze and improve these processes based on \emph{event log} data from the system~\cite{processmining}.
Event logs show the actual behavior of the system, and could be obtained in various ways, like, for example, instrumentation techniques.
Numerous state of the art process mining techniques are readily available and can be used and combined through the Process Mining Toolkit ProM~\cite{verbeek2011xes}.
In addition, event logs are backed by the IEEE XES standard \cite{gunther2014xes, verbeek2011xes}.

Typically, the run-time behavior of a system is large and complex.
Current techniques usually produce flat models that are not expressive enough to master this complexity and are often difficult to understand.
Especially in the case of software systems, there is often a hierarchical, possibly recursive, structure implicitly reflected by the behavior and event logs.
This hierarchical structure can be made explicit and should be used to aid model discovery and further analysis.

In this paper, \emph{we 1) propose a novel hierarchy and recursion extension to the process tree model; and 2) define the first, recursion aware process model discovery technique that leverages hierarchical information in event logs, typically available for software systems.}
This technique allows us to analyze the operational processes of software systems under real-life conditions at multiple levels of granularity.
In addition, \emph{the proposed technique has a huge potential to speed up discovery by exploiting the available hierarchy}.
An implementation  of the proposed algorithms is made available via the \emph{Statechart} plugin for ProM~\cite{prom:statechart}.
The Statechart workbench provides an intuitive way to discover, explore and analyze hierarchical behavior, integrates with existing ProM plugins and links back to the source code in Eclipse.

This paper is organized as follows (see Figure~\ref{fig:outline}).
Section~\ref{sec:related-work} positions the work in existing literature.
Section~\ref{sec:definitions} presents formal definitions of the input (event logs) and the proposed novel hierarchical process trees.
In Section~\ref{sec:heuristics}, we discuss how to obtain an explicit hierarchical structure.
Two proposed novel, hierarchical process model discovery techniques are explained in Section~\ref{sec:discovery}.
In Section~\ref{sec:postprocess}, we show how to filter, annotate, and visualize our hierarchical process trees.
The approach is evaluated in Section~\ref{sec:evaluation} using experiments and a small demo.
Section~\ref{sec:conclusion} concludes the paper.

\newcommand{\rwHDef}[1]{\multicolumn{1}{l}{#1}}
\newcommand{\rwHRot}[1]{\multicolumn{1}{c}{\hspace*{-1ex}\makebox[1em][r]{\rotatebox[origin=r]{-40}{\small{#1}}}}}

\newcommand{\rwCRule}{\cmidrule{2-17}}
\newcommand{\rwCRot}[2]{\multirow{#1}{*}{\rotatebox[origin=c]{90}{\scriptsize\textbf{#2}}}}

\begin{table*}[ht!]%
  \centering%
\noindent\adjustbox{max width=0.98\textwidth}{%
\begin{threeparttable}%
  \caption{Comparison of related techniques and the expressiveness of the
  resulting models, divided into the groups
  from Section~\ref{sec:related-work}.}%
  \vspace*{0.1cm}
    \begin{tabular}{
        @{\hskip3pt}l@{\hskip3pt}l@{\hskip3pt}l 
        | 
        l@{\hskip6pt}l@{\hskip6pt}l 
        |
        cc
        |
        ccc
        |
        cccccc
    }
          &       & \rwHDef{Author} 
          & \rwHDef{Technique / Toolkit} & \rwHDef{Input} 
          & \rwHDef{Formalism}
          & \rwHRot{Execution Semantics}
          & \rwHRot{Model Quality}
          & \rwHRot{Aggregate Runs}
          & \rwHRot{Frequency Info} 
          & \rwHRot{Performance Info}
          & \rwHRot{Choice} 
          & \rwHRot{Loop}
          & \rwHRot{Concurrency}
          & \rwHRot{Hierarchy}
          & \rwHRot{Named Submodels} 
          & \rwHRot{Recursion} \\
          
    \rwCRule
    \rwCRot{5}{Static Analysis}
          & \cite{tonella2003interact-c} & Tonella & Object flow analysis & C++ source code & UML Interact. & -     & n/a   & n/a   & -     & -     & -     & -     & -     & -     & -     & - \\
          & \cite{kollmann2001collabdiag} & Kollmann & Java code structures & Java source code & UML Collab. & -     & n/a   & n/a   & -     & -     & -     & -     & -     & -     & -     & - \\
          & \cite{korshunova2006cpp2xmi} & Korshunova & CPP2XMI, SQuADT & C++ source code & UML SD, AD & -     & n/a   & n/a   & -     & -     & \checkmark & \checkmark & -     & -     & -     & - \\
          & \cite{Rountev2005staticsd} & Rountev & Dataflow analysis & Java source code & UML SD & $\pm$\tnote{1} & n/a   & n/a   & -     & -     & \checkmark & \checkmark & -     & -     & -     & - \\
          & \cite{Amighi2012cfgjava} & Amighi & Sawja framework & Java byte code & CFG   & -     & n/a   & n/a   & -     & -     & \checkmark & \checkmark & -     & -     & -     & - \\

    \rwCRule
    \rwCRot{11}{Dynamic Analysis}
          & \cite{alalfi2009automated} & Alalfi & PHP2XMI & Instrumentation & UML SD & $\pm$\tnote{1} & -     & -     & -     & -     & -     & -     & -     & -     & -     & - \\
          & \cite{oechsle2002javavis} & Oechsle & JAVAVIS & Java debug interface & UML SD & $\pm$\tnote{1} & -     & -     & -     & -     & -     & -     & \checkmark & -     & -     & - \\
          & \cite{briand2003towards} & Briand & Meta models / OCL & Instrumentation & UML SD & $\pm$\tnote{1} & -     & -     & -     & -     & \checkmark & \checkmark & -     & -     & -     & - \\
          & \cite{briand2006toward} & Briand & Meta models / OCL & Instrumentation & UML SD & $\pm$\tnote{1} & -     & -     & -     & -     & \checkmark & \checkmark & \checkmark & -     & -     & - \\
          & \cite{labiche2013combining} & Labiche & Meta models / OCL & Instrumentation + source & UML SD & $\pm$\tnote{1} & -     & -     & -     & -     & \checkmark & \checkmark & -     & -     & -     & - \\
          & \cite{systa2001shimba} & Syst\"a & Shimba & Customized debugger & SD variant & $\pm$\tnote{1} & -     & -     & -     & -     & \checkmark & \checkmark & -     & -     & -     & - \\
          & \cite{Walkinshaw2016mint} & Walkinshaw & MINT  & Log with traces & EFSM  & \checkmark & \checkmark & \checkmark & -     & -     & \checkmark & \checkmark & -     & -     & -     & - \\
          & \cite{beschastnikh2014inferring} & Beschastnikh & CSight & Given log, instrument & CFSM  & \checkmark & \checkmark & \checkmark & -     & -     & \checkmark & \checkmark & \checkmark & -     & -     & - \\
          & \cite{ackermann2009recovering} & Ackermann & Behavior extraction & Monitor network packets & UML SD & $\pm$\tnote{1} & -     & \checkmark & -     & \checkmark & \checkmark & \checkmark & -     & -     & -     & - \\
          & \cite{Graham1982gprof} & Graham & gprof profiler & Instrumentation & Call graphs & -     & -     & -     & -     & \checkmark & -     & -     & -     & \checkmark & \checkmark & $\pm$\tnote{8} \\
          & \cite{DePauw1998execpattern} & De Pauw & Execution patterns & Program Trace & Exec. Pattern & -     & -     & -     & \checkmark & -     & -     & \checkmark & -     & \checkmark & \checkmark & \checkmark \\
          & \cite{Beschastnikh2011synoptic} & Beschastnikh & Synoptic & Log with traces & FSM   & \checkmark & \checkmark & \checkmark & \checkmark & -     & \checkmark & \checkmark & -     & -     & -     & - \\
          & \cite{Heule2010} & Heule & DFAsat & Log with traces & DFA   & \checkmark & \checkmark & \checkmark & -     & -     & \checkmark & \checkmark & -     & -     & -     & - \\
    
    \rwCRule
    \rwCRot{3}{Grammar}
          & \cite{NevillManning1997sequitur} & Nevill-Manning & Sequitur & Symbol sequence & Grammar & \checkmark & -     & -     & -     & -     & -     & -     & -     & \checkmark & -     & - \\
          & \cite{Siyari2016lexis} & Siyari & Lexis & Symbol sequence & Lexis-DAG & \checkmark & -     & \checkmark & -     & -     & -     & -     & -     & \checkmark & -     & - \\
          & \cite{Jonyer2004subduegl} & Jonyer & SubdueGL & Symbol Graph & Graph Grammar & \checkmark & -     & -     & -     & -     & \checkmark & -     & -     & \checkmark & -     & $\pm$\tnote{9} \\

    \rwCRule
    \rwCRot{8}{Process Mining}
          & \cite{aalst2004workflow-alpha} & Van der Aalst & Alpha algorithm & Event Log & Petri net & \checkmark & -     & \checkmark & \checkmark\tnote{2} & \checkmark\tnote{2} & \checkmark & \checkmark & \checkmark & -     & -     & - \\
          & \cite{aalst2010process} & Van der Aalst & Theory of Regions & Event Log & Petri net & \checkmark & -     & \checkmark & \checkmark\tnote{2} & \checkmark\tnote{2} & \checkmark & \checkmark & \checkmark & -     & -     & - \\
          & \cite{Weijters2011-fhm} & Weijters  & Flexible heuristics miner & Event Log & Heuristics net & \checkmark & -     & \checkmark & \checkmark\tnote{2} & \checkmark\tnote{2} & \checkmark & \checkmark & \checkmark & -     & -     & - \\
          & \cite{vanderWerf2008-ilp} & Werf, van der & ILP miner & Event Log & Petri net & \checkmark & \checkmark & \checkmark & \checkmark\tnote{2} & \checkmark\tnote{2} & \checkmark & \checkmark & \checkmark & -     & -     & - \\
          & \cite{zelst2015-ilp-filter} & Zelst, S. J. van & ILP with filtering & Event Log & Petri net & \checkmark & \checkmark & \checkmark & \checkmark\tnote{2} & \checkmark\tnote{2} & \checkmark & \checkmark & \checkmark & -     & -     & - \\
          & \cite{aalst2005genetic} & Alves de Medeiros & Genetic Miner & Event Log & Petri net & \checkmark & \checkmark & \checkmark & \checkmark\tnote{2} & \checkmark\tnote{2} & \checkmark & \checkmark & \checkmark & -     & -     & - \\
          & \cite{buijs2012role} & Buijs & ETM algorithm & Event Log & Process tree & \checkmark & \checkmark & \checkmark & \checkmark\tnote{2} & \checkmark\tnote{2} & \checkmark & \checkmark & \checkmark & -     & -     & - \\
          & \cite{sleemans-thesis} & Leemans S.J.J. & Inductive Miner & Event Log & Process tree & \checkmark & \checkmark & \checkmark & \checkmark\tnote{2} & \checkmark\tnote{2} & \checkmark & \checkmark & \checkmark & -     & -     & - \\
          & \cite{gunther2007fuzzy} & G\"unther & Fuzzy Miner & Event Log & Fuzzy model & -     & -     & \checkmark & \checkmark\tnote{3} & -     & \checkmark & \checkmark\tnote{4} & -     & \checkmark & -     & - \\
          & \cite{Bose2012HPM} & Bose  & Two-phase discovery & Event Log & Fuzzy model & -     & -     & \checkmark & \checkmark\tnote{3} & -     & \checkmark & \checkmark\tnote{5} & -     & \checkmark & -     & - \\
          & \cite{conforti2016bpmnminer} & Conforti & BPMN miner & Event Log & BPMN  & \checkmark & \checkmark  & \checkmark & $\pm$\tnote{2} & $\pm$\tnote{2} & \checkmark & \checkmark\tnote{6} & \checkmark & \checkmark & -     & - \\
    
    \rwCRule
          &       & \textbf{This paper} & Recursion Aware Disc. & Event Log & H. Process tree\tnote{*} & \checkmark & \checkmark & \checkmark & \checkmark\tnote{2} & \checkmark\tnote{2} & \checkmark & \checkmark\tnote{7} & \checkmark & \checkmark & \checkmark & \checkmark \\
    \end{tabular}%
    \vspace*{-1em}
    \begin{tablenotes}%
    \begin{multicols}{2}
        \item[1] Formal semantics are available for UML SD variants.
        \item[2] Aligning an event log and a process model enables advanced performance, frequency, and conformance analysis, as described in~\cite{adriansyah2014aligning,sleemans-thesis}.
        \item[3] Various log-based process metrics have been defined which capture different notions of frequency, significance, and correlation~\cite{gunther2007fuzzy}.
        \item[4] The hierarchy is based on anonymous clusters in the resulting model~\cite{gunther2007fuzzy}.
        \item[5] The hierarchy is based on abstraction patterns over events~\cite{Bose2009AbstractTax, Bose2012HPM}.
        \item[6] The hierarchy is based on discovered relations over extra data in the event log.
        \item[7] The hierarchy is based on the hierarchical information in the event log.
        \item[8] Recursion is detectable as a cycle, but without performance analysis support.
        \item[9] Only tail recursion is supported.
        \item[*] \emph{Hierarchical process tree}, as introduced in Definition~\ref{def:hier-process-tree}.
     \end{multicols}
     \end{tablenotes}%
  \label{tab:related-work-compare}%
  \end{threeparttable}%
}
  \vspace*{-2.2em}
\end{table*}%

\begin{figure}[!htb]%
    \centering%
    \vspace*{-0.5em}
    \includegraphics[width=0.48\textwidth]{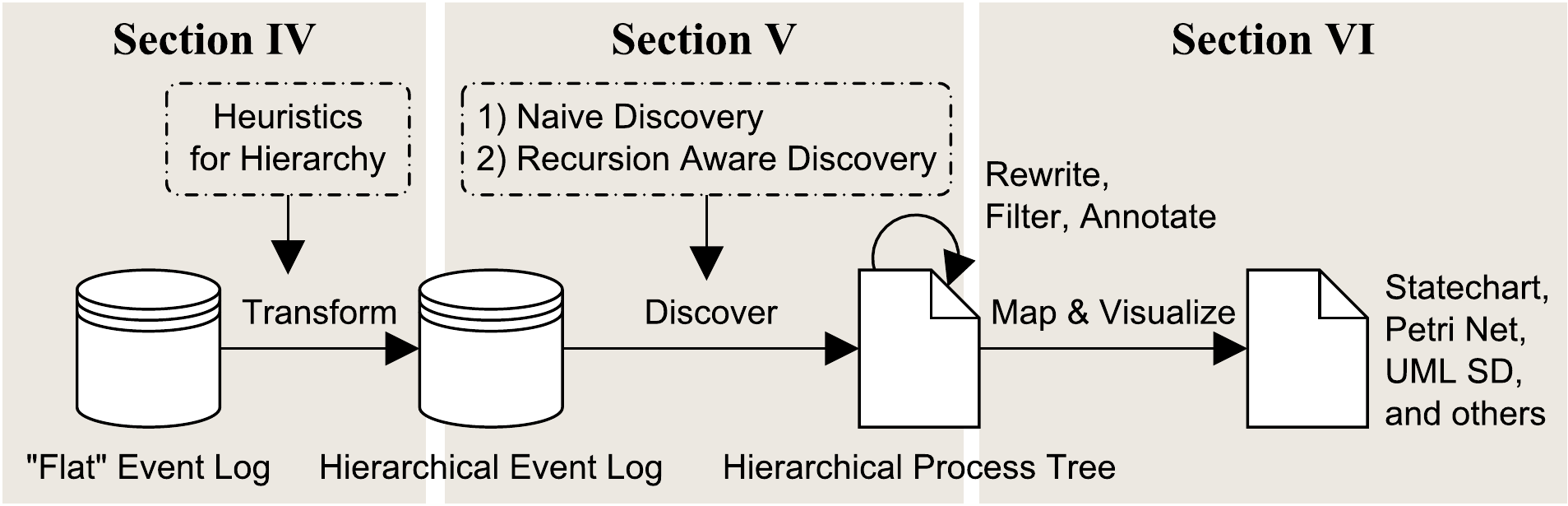}%
    \vspace*{-0.5em}
    \caption{Outline of the paper and our discovery approach.}%
    \label{fig:outline}%
\end{figure}%

\section{Related Work}
\label{sec:related-work}

Substantial work has been done on constructing models from software or example behavior in a variety of research domains.
This section presents a brief comparison of various approaches and focuses mainly on the design criteria from the introduction.
That is, the approach and tools should provide a behavioral backbone model that is expressive, precise and fits the actual system, and ideally should be able to support at least performance (timing) and frequency (usage) analysis.
Table~\ref{tab:related-work-compare} summarizes the comparison.

\subsection{Groups and Criteria for Comparison}

We have divided the related work into four groups.
\emph{Static Analysis} utilizes static artifacts like source code files.
\emph{Dynamic Analysis} utilizes runtime information through instrumentation or tracing interfaces like debuggers.
\emph{Grammar Inference} relies on example behavior in the form of abstract symbol sequences.
\emph{Process Mining} relies on events logs and is in that sense a more implementation or platform agnostic approach.

For the comparison on design criteria, we define three sets of features.
Firstly, a precise and fit model should: a) have \emph{formal execution semantics}, and b) the underlying discovery algorithm should either guarantee or allow the user to control the \emph{model quality}.
The quality of a model is typically measured in terms of metrics like \emph{fitness} and \emph{precision}, but other qualities (e.g., simplicity and generalization) can also be considered~\cite{processmining}.
Fitness expresses the part of the log (c.q.
system behavior) that is represented by the model;
precision expresses the behavior in the model that is present in the log (c.q.
system behavior).
Secondly, the model should be used as the backbone for further analysis.
At the very least, \emph{frequency} (usage) and \emph{performance} (timing) analysis should be supported.
In addition, the analysis should be (statistically) significant, and hence the technique should be able to \emph{aggregate} information over multiple execution \emph{runs}.
Thirdly, the model should be expressive and be able to capture the type of behavior encountered in software system cases.
Not only should branching behavior like \emph{choices} (e.g., if-then-else) and \emph{loops} (e.g., foreach, iterators) be supported, but also hierarchy and recursion.
Furthermore, for hierarchies, meaningful \emph{names} for the different \emph{submodels} are also important.

\subsection{Discussion of the Related Work}

In general, static and symbolic analysis of software has difficulty capturing the actual, dynamic behavior; especially in the case of dynamic types (e.g., inheritance, dynamic binding, exception handling) and jumps.
In these cases, it is often favorable to observe the actual system for the behavior.
Since static techniques either unfold or do not explore function calls, they lack support for recursive behavior.
In addition, because these techniques only look at static artifacts, they lack any form of timing or usage analysis.

In the area of dynamic analysis, the focus is on obtaining a rich but flat control flow model.
A lot of effort has been put in enriching models with more accurate choice and loop information, guards, and other predicates.
However, notions of recursion or preciseness of models, or application of these models, like for analysis, seems to be largely ignored.
The few approaches that do touch upon performance or frequency analysis (\cite{ackermann2009recovering,DePauw1998execpattern,Graham1982gprof}) do so with models lacking formal semantics or model quality guarantees.

In contrast to dynamic analysis techniques, grammar inference approaches are actively looking for repeating sub patterns (i.e., sources for hierarchies).
The used grammars have a strong formal basis.
However, in the grammar inference domain, abstract symbols are assumed as input, and the notion of branching behavior (e.g., loops) or analysis is lost.

In the area of process mining, numerous techniques have been proposed.
These techniques have strong roots in Petri nets, model conversions, and alignment-based analysis~\cite{adriansyah2014aligning,sleemans-thesis}
Process mining techniques yield formal models directly usable for advanced performance, frequency and conformance analysis.
There are only a few techniques in this domain that touch upon the subject of hierarchies.
In~\cite{gunther2007fuzzy,Bose2012HPM}, a hierarchy of anonymous clusters is created based on behavioral abstractions.
The hierarchy of anonymous clusters in~\cite{conforti2016bpmnminer} is based on functional and inclusion dependency discovery techniques over extra data in the event log.
None of these techniques yields named submodels or supports recursion.

Process mining techniques rely on event logs for their input.
These event logs can easily be obtained via the same techniques used by dynamic analysis for singular and distributed systems \cite{leemans2015models}.
Example techniques include, but are not limited to, Java Agents, Javassist \cite{chiba1998javassist, chiba200loadtime-javassist}, AspectJ \cite{gradecki2003mastering}, AspectC++ \cite{spinczyk2002aspectcpp}, AOP++ \cite{yao2005aoppp} and the TXL preprocessor \cite{cordy2006txl}.

\section{Definitions}
\label{sec:definitions}

Before we explain the proposed discovery techniques, we first introduce the definitions for our input and internal representation.
We start with some preliminaries in Subsection~\ref{sec:preliminaries}.
In Subsection~\ref{sec:eventlogs} we introduce two types of event logs: ``flat'' event logs (our input), and hierarchical event logs (used during discovery).
Finally, in Subsection~\ref{sec:process-trees} and~\ref{sec:hier-process-trees}, we will discuss the process tree model and our novel extension: \emph{hierarchical process trees}.
Throughout this paper, we will be using the program given in Listing~\ref{lst:running-example} as a running example, and assume we can log the start and end of each method.

\vspace*{0.2cm}%
\begin{lstlisting}[language=Java, basicstyle=\scriptsize\ttfamily,
frame=single, numbers=left, xleftmargin=6.0ex, xrightmargin=3.0ex, captionpos=b,
label=lst:running-example, 
caption={Running example program code, logged at the method level.}]
public class Main {
    public static void main(int argument) {
        A inst = input(argument);
        inst.process(argument);
        output();
    }
    private static A input(int i) { ... }
    private static void output() { ... }
}
class A {
    public void process(int i) { ... }
}
class B extends A {
    public void process(int i) {
        if (i <= 0) {
            super.process(i);
        } else {
            stepPre();
            process(i - 1);
            stepPost();
        }
    }
    private void stepPre() { ... }
    private void stepPost() { ... }
}
\end{lstlisting}
\vspace*{-0.1cm}

\subsection{Preliminaries}
\label{sec:preliminaries}

\subsubsection{Multisets}
We denote the set of all multisets over some set $A$ as $\bag{A}$.
Note that the ordering of elements in a multiset is irrelevant.

\subsubsection{Sequences}
Given a set~$X$, a sequence over~$X$ of length~$n$ is denoted as~$\seq{a_1, \ldots, a_n} \in \setseq{X}$.
We define $\seq{a_1, \ldots, a_n}[i] = a_i$.
The empty sequence is denoted as~$\symEmptyTrace$.
Note that the ordering of elements in a sequence is relevant.
We write $\seqcat$ to denote sequence concatenation, for example: $\seq{a} \seqcat \seq{b} = \seq{a, b}$ , and $\seq{a} \seqcat \symEmptyTrace = \seq{a}$.
We write $\project{x}{i}$ to denote sequence (tail) projection, where $\project{\seq{a_1, a_2, \ldots, a_n}}{i} = \seq{a_i, \ldots, a_n}$.
For example: $\project{\seq{a, b}}{0} = \seq{a, b}$, $\project{\seq{a, b}}{1} = \seq{b}$, and $\project{\seq{a, b}}{2} = \symEmptyTrace$.
We write $\seqshuffle$ to denote sequence interleaving (shuffle).
For example: $\seq{a,b} \seqshuffle \seq{c, d} = \{\, \seq{a,b,c,d},$ $ \seq{a,c,b,d}, \seq{a,c,d,b}, \seq{c,a,b,d}, \seq{c,a,d,b}, \seq{c,d,a,b} \,\}$.

\subsection{Event Logs}
\label{sec:eventlogs}

\subsubsection{``Flat'' Event Logs}

The starting point for any process mining technique is an \emph{event log}, a set of \emph{events} grouped into \emph{traces}, describing what happened when.
Each trace corresponds to an execution of a process; typically representing an example run in a software context.
Various \emph{attributes} may characterize events, e.g., an event may have a timestamp, correspond to an activity, denote a start or end, reference a line number, is executed by a particular process or thread, etc.

\subsubsection{Hierarchical Event Logs}
\label{sec:def-hier-event-logs}

A hierarchical event log extends on a ``flat'' event log by assigning multiple activities to events; each activity describes what happened at a different level of granularity.
We assume a ``flat'' event log as input.
Based on this input, we will create a hierarchical event log for discovery.
For the sake of clarity, we will ignore most event attributes, and use sequences of activities directly, as defined below.

\begin{define}{Hierarchical Event Log}{def:Heventlog}
Let~$\actAlph$ be a set of activities.
Let~$\symEventLog \in \bag{\setseq{(\setseq{\actAlph})}}$ be a hierarchical event log, a multiset of traces.
A trace $t \in \symEventLog$, with $t \in \setseq{(\setseq{\actAlph})}$, is a sequence of events.
Each event $x \in t$, with $x \in \setseq{\actAlph}$, is described by a sequence of activities, stating which activity was executed at each level in the hierarchy.
\end{define}

Consider, for example, the hierarchical event log $\symEventLog = \mset{ \seq{ \seq{g,a}, \seq{g,b}, \seq{c} } }$.
This log has one trace, where the first event is labeled $\seq{g, a}$, the second event is labeled $\seq{g, b}$, and the third event is labeled $\seq{c}$.
For the sake of readability, we will use the following shorthand notation: $\symEventLog = \mset{ \seq{g.a, g.b, c} }$.
In this example log, we have two levels in our hierarchy: the longest event label has length~2, notation: $\|\symEventLog\| = 2$.
Complex behavior, like choices, loops and parallel (interleaved) behavior, is typically represented in an event log via multiple (sub)traces, showing the different execution paths.

We write the following to denote hierarchy concatenation: $f.\seq{g.a, g.b, c} = \seq{f.g.a, f.g.b, f.c}$.
We generalize concatenation to hierarchical logs: $f.\symEventLog = \msetc{ f.\symTrace }{ \symTrace \in \symEventLog}$.

We extend sequence projection to hierarchical traces and logs, such that a fixed length prefix is removed for all events: $\seqproject{\seq{g.a, g.b, c}}{0} = \seq{g.a, g.b, c}$, $\seqproject{\seq{g.a, g.b, c}}{1} = \seq{a, b}$, $\seqproject{\seq{g.a, g.b, c}}{2} = \symEmptyTrace$.
For logs: $\seqproject{\symEventLog}{i} = \msetc{ \seqproject{\symTrace}{i} }{ \symTrace \in \symEventLog }$.

In Table~\ref{tab:ex-htrace}, an example hierarchical trace is shown.
Here, we used the class plus method name as activities.
While generating logs, one could also include the full package name (i.e., a canonical name), method parameter signature (to distinguish overloaded methods), and more.

\subsection{Process Trees}
\label{sec:process-trees}

In this subsection, we introduce \emph{process trees} as a notation to compactly represent \emph{block-structured models}.
An important property of block-structured models is that they are \emph{sound by construction}; they do not suffer from deadlocks, livelocks, and other anomalies.
In addition, process trees are tailored towards process discovery and have been used previously to discover block-structured workflow nets~\cite{sleemans-thesis}.
A process tree describes a language; an operator describes how the languages of its subtrees are to be combined.

\begin{define}{Process Tree}{def:processtree}
We formally define \emph{process trees} recursively.
We assume a finite alphabet $\actAlph$ of activities and a set $\treeSetOp$ of operators to be given.
Symbol $\actSilent \notin \actAlph$ denotes the silent activity.
\begin{itemize}
    \item $a$ with $a \in (\actAlph \cup \set{\actSilent})$ is a process tree;
    \item Let $\symTree_1, \ldots, \symTree_n$ with $n > 0$ be process trees and let $\treeOp \in \treeSetOp$ be a process tree operator, then $\treeOp(\symTree_1, \ldots, \symTree_n)$ is a process tree.
\end{itemize}

We consider the following operators for process trees:
\noindent
\begin{tabularx}{0.5\textwidth}{>{\centering\arraybackslash} X p{0.45\textwidth}}
    $\treeOpSeq$    & denotes the \emph{sequential execution} of all subtrees \\
    $\treeOpXor$    & denotes the \emph{exclusive choice} between one of the subtrees \\
    $\treeOpLoop$   & denotes the \emph{structured loop} of loop body $\symTree_1$ and alternative loop back paths $\symTree_2, \ldots, \symTree_n$ (with $n \geq 2$) \\
    $\treeOpPar$    & denotes the \emph{parallel (interleaved) execution} of all subtrees \\
\end{tabularx}
\vspace*{-0.8em}
\end{define}

To describe the semantics of process trees, the language of a process tree $\symTree$ is defined using a recursive monotonic function $\treeLan(\symTree)$, where each operator $\treeOp$ has a language join function $\treeOp^l$:

{\vspace*{-2ex}\small\begin{align*}
	\treeLan(a) 	&\,= \set{\seq{a}} \text{\textbf{ for }} a \in \actAlph \\
	\treeLan(\actSilent)	&\,= \set{\symEmptyTrace} \\
\treeLan(\treeOp(\symTree_1, \ldots, \symTree_n)) &\,= \treeOp^l(\treeLan(\symTree_1), \ldots, \treeLan(\symTree_n))%
\end{align*}\vspace*{-2ex}}

Each operator has its own language join function $\treeOp^l$.
The language join functions below are borrowed from~\cite{sleemans-thesis, processmining}:

{\vspace*{-2ex}\footnotesize\begin{align*}
  \treeOpSeq^l(L_1, \ldots, L_n) =\,& \setc{
    t_1 \seqcat \ldots \seqcat t_n
  }{
    \forall 1 \leq i \leq n : t_i \in L_i
  } \\
  \treeOpXor^l(L_1, \ldots, L_n) =\,& \textstyle \bigcup_{1 \leq i \leq n} L_i \\
  \treeOpLoop^l(L_1, \ldots, L_n) =\,& \{\,
    t_1 \seqcat t'_1 \seqcat t_2 \seqcat t'_2 \seqcat \ldots \seqcat t_{m-1} \seqcat t'_{m-1} \seqcat t_m \\
  &\;|\; \textstyle
    \forall i : t_i \in L_1 ,\, t'_i  \in \bigcup_{2 \leq j \leq n}
    L_j \,\} \\
  \treeOpPar^l(L_1, \ldots, L_n) =\,& \setc{
    t' \in (t_1 \seqshuffle \ldots \seqshuffle t_n)
  }{
    \forall 1 \leq i \leq n : t_i \in L_i
  }%
\end{align*}}%

Example process trees and their languages:

{\vspace*{-2ex}\small\begin{align*}
  \treeLan(\treeOpSeq(a, \treeOpXor(b, c))) =\,& \set{ \seq{a, b}, \seq{a, c} } \\
  \treeLan(\treeOpPar(a, b)) =\,& \set{ \seq{a, b}, \seq{b, a} } \\
  \treeLan(\treeOpPar(a, \treeOpSeq(b, c)) =\,& \set{ \seq{a, b, c},  \seq{b, a, c}, \seq{b, c, a} } \\
  \treeLan(\treeOpLoop(a, b)) =\,& \set{ \seq{a}, \seq{a, b, a}, \seq{a,b,a,b,a}, \ldots }%
\end{align*}}%

\subsection{Hierarchical Process Trees}
\label{sec:hier-process-trees}

We extend the process tree representation to support hierarchical and recursive behavior.
We add a new tree operator to represent a named submodel, and add a new tree leaf to denote a recursive reference.
Figure~\ref{fig:ex-ptree} shows an example model.

\begin{define}{Hierarchical Process Tree}{def:hier-process-tree}
We formally define \emph{hierarchical process trees} recursively.
We assume a finite alphabet $\actAlph$ of activities to be given.
\begin{itemize}
    \item Any process tree is also a hierarchical process tree;
    \item Let $\symTree$ be a hierarchical process tree, and~$f \in \actAlph$, then $\treeOpCompOr_f(\symTree)$ is a hierarchical process tree that denotes the named subtree, with name~$f$ and subtree $\symTree$;
    \item $\treeOpRecurOr_f$ with $f \in \actAlph$ is a hierarchical process tree.
Combined with a named subtree operator $\treeOpCompOr_f$, this leaf denotes the point where we recurse on the named subtree.
See also the language definition below.
\end{itemize}
\vspace*{-0.8em}
\end{define}

The semantics of hierarchical process trees are defined by extending the language function $\treeLan(\symTree)$.
A recursion at a leaf~$\treeOpRecurOr_f$ is `marked' in the language, and resolved at the level of the first corresponding named submodel~$\treeOpCompOr_f$.
Function~$\psi^l_f$ `scans' a single symbol and resolves for marked recursive calls (see $\fun{\psi^l_f}{ \treeOpRecurOr_f }$).
Note that via~$\psi^l_f$, the language $\treeOpCompOr^{l}_{f}$ is defined recursively and yields a hierarchical, recursive language.

{\vspace*{-2ex}\small\begin{align*}
  \treeLan(\treeOpRecurOr_f)
    =\,& \set{\seq{\treeOpRecurOr_f}} \text{\textbf{ for }} f \in \actAlph \\
  \treeOpCompOr^{l}_{f}(L)
    =\,& \{\, f .
( t'_1 \seqcat \ldots \seqcat t'_n ) \;|\; \seq{x_1, \ldots, x_n} \in L , \\
       &\;\; \forall 1 \leq i \leq n : t'_i \in \fun{\psi^l_f}{x_i} \,\} \text{\textbf{ for }} f \in \actAlph \\
  \text{\textbf{ where }}
    \fun{\psi^l_f}{ \symEmptyTrace }    =\,& \set{ \symEmptyTrace } \\
    \fun{\psi^l_f}{ \treeOpRecurOr_f }  =\,& \treeOpCompOr^{l}_{f}(L) \text{\textbf{ for }} f \in \actAlph \\
    \fun{\psi^l_f}{ \treeOpRecurOr_g }  =\,& \set{ \seq{\treeOpRecurOr_g} } \text{\textbf{ for }} g \neq f \land g \in \actAlph \\
    \fun{\psi^l_f}{ a .
x }             =\,& \setc{ a .
( t' ) }{ t' \in \fun{\psi^l_f}{ x } } \text{\textbf{ for }} a \in \actAlph, x \in \setseq{\actAlph}
\end{align*}}

Example hierarchical process trees and their languages:

{\vspace*{-2ex}\small\begin{align*}
  \treeLan(\treeOpCompOr_f(\treeOpSeq(a, b)) =\,& \set{ \seq{f.a, f.b} } \\
  \treeLan(\treeOpCompOr_f(\treeOpSeq(a, \treeOpCompOr_g(b))) =\,& \set{ \seq{f.a, f.g.b} } \\
  \treeLan(\treeOpCompOr_f(\treeOpXor(\treeOpSeq(a, \treeOpRecurOr_f), b)) =\,& \{\, \seq{f.b}, \seq{f.a, f.f.b}, \\
&\;\, \seq{f.a, f.f.a, f.f.f.b}, \ldots \,\} \\
  \treeLan(\treeOpCompOr_f(\treeOpCompOr_g(\treeOpXor(a, \treeOpRecurOr_f))) =\,&
    \{\, \seq{f.g.a}, \seq{f.g.f.g.a}, \\
&\;\, \seq{f.g.f.g.f.g.a}, \ldots \,\} \\
  \treeLan(\treeOpCompOr_f(\treeOpCompOr_g(\treeOpXor(a, \treeOpRecurOr_f, \treeOpRecurOr_g))) =\,&
    \{\, \seq{f.g.a}, \seq{f.g.g.a}, \seq{f.g.f.g.a}, \\
&\;\, \seq{f.g.f.g.g.a}, \seq{f.g.g.f.g.a}, \ldots \,\}
\end{align*}} 

\begin{figure}[!htb]%
    \centering%
    \vspace*{-1.5em}
    \begin{tikzpicture}[
    sibling distance = 6em,
    level distance = 2em,
    every node/.style = {draw=none, fill=none, align=center}
]
  \node {$\treeOpCompOr_{\text{\cf{Main.main()}}}$}
    child { node {$\treeOpSeq$}
      child { node {\cf{Main.input()}} }
      child { node {$\treeOpCompOr_{\text{\cf{B.process()}}}$}
        child { node {$\treeOpXor$}
           child { node {\cf{A.process()}} }
           child { node {$\treeOpSeq$}
             child { node {\cf{B.stepPre()}} }
             child { node {$\treeOpRecurOr_{\text{\cf{B.process()}}}$} }
             child { node {\cf{B.stepPost()}} }
           }
        }
      }
      child { node {\cf{Main.output()}} } 
    };
\end{tikzpicture}%
    \caption{The hierarchical process tree model corresponding to the hierarchical event log with the trace from Table~\ref{tab:ex-htrace}.
Note that we modeled the recursion at \cf{B.process()} explicitly.}%
    \label{fig:ex-ptree}%
    \vspace*{-2.2ex}
\end{figure}
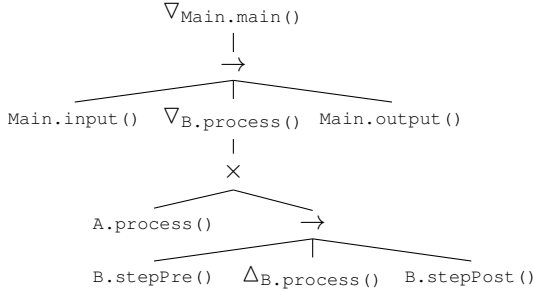%

\section{Heuristics for Hierarchy}
\label{sec:heuristics}

To go from ``flat'' event logs (our input) to hierarchical event logs (used during discovery), we rely on transformation heuristics.
In the typical software system case, we will use the \emph{Nested Calls} heuristic, but our implementation also provides other heuristics for different analysis scenarios.
We will discuss three of these heuristics for hierarchy next.

1) \emph{Nested Calls} captures the ``executions call stacks''.
Through the use of the life-cycle attribute (start-end), we can view flat event logs as a collection of intervals, and use \emph{interval containment} to build our hierarchy.
In Figure~\ref{fig:ex-log-int}, a trace from the event log corresponding to the program in Listing~\ref{lst:running-example} is depicted as intervals.
Table~\ref{tab:ex-htrace} shows the corresponding ``nested calls'' hierarchical event log trace.
2) \emph{Structured Names} captures the static structure or ``architecture'' of the software source code.
By splitting activities like ``\cf{package.class.method()}'' on ``\cf{.}'' into\ $\seq{\text{\cf{package}},\, \text{\cf{class}},\, \text{\cf{method()}}}$, we can utilize the designed hierarchy in the code as a basis for discovery.
3) \emph{Attribute Combination} is a more generic approach, where we can combine several attributes associated with events in a certain order.
For example, some events are annotated with a high-level and a low-level description, like an interface or protocol name plus a specific function or message/signal name.

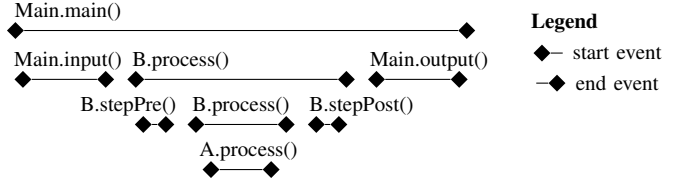
\begin{figure}[!htb]%
    \centering%
    \begin{tikzpicture}[
    node distance=0.5cm,
    event/.style={diamond,fill, scale=0.4, text width=0.1cm},
    edge/.style={semithick}
    label/.style={}
]
    \node[label] at (7.3,0.1) {\footnotesize \textbf{Legend}};
    
   \node[event] (start_s) at (7, -0.3) {};
   \node[] (start_c) at (7.4, -0.3) {};
   \draw[edge] (start_s) to (start_c);
   \node[label] at (8.0,-0.3) {\footnotesize start event};
   
   \node[] (end_s) at (6.8, -0.7) {};
   \node[event] (end_c) at (7.2, -0.7) {};
   \draw[edge] (end_s) to (end_c);
   \node[label] at (8.0,-0.7) {\footnotesize end event};

    \node[event] (main_s) at (0,0) {};
    \node[event] (main_c) at (6,0) {};
    \node[label] at (0.7,0.25) {\footnotesize Main.main()};
    
    \node[event] (input_s) at (0.1,-0.65) {};
    \node[event] (input_c) at (1.2,-0.65) {};
    \node[label] at (0.7,-0.4) {\footnotesize Main.input()};
    
    \node[event] (Bproc1_s) at (1.6,-0.65) {};
    \node[event] (Bproc1_c) at (4.4,-0.65) {};
    \node[label] at (2.2,-0.4) {\footnotesize B.process()};
    
    \node[event] (output_s) at (4.8,-0.65) {};
    \node[event] (output_c) at (5.9,-0.65) {};
    \node[label] at (5.5,-0.4) {\footnotesize Main.output()};
    
    \node[event] (spre_s) at (1.7,-1.25) {};
    \node[event] (spre_c) at (2.0,-1.25) {};
    \node[label] at (1.5,-1) {\footnotesize B.stepPre()};
    
    \node[event] (Bproc2_s) at (2.4,-1.25) {};
    \node[event] (Bproc2_c) at (3.6,-1.25) {};
    \node[label] at (3.0,-1) {\footnotesize B.process()};
    
    \node[event] (spost_s) at (4.0,-1.25) {};
    \node[event] (spost_c) at (4.3,-1.25) {};
    \node[label] at (4.6,-1) {\footnotesize B.stepPost()};
    
    \node[event] (Aproc_s) at (2.6,-1.85) {};
    \node[event] (Aproc_c) at (3.4,-1.85) {};
    \node[label] at (3.1,-1.6) {\footnotesize A.process()};
    
    \draw[edge] (main_s) to (main_c);
    
    \draw[edge] (input_s) to (input_c);
    \draw[edge] (Bproc1_s) to (Bproc1_c);
    \draw[edge] (output_s) to (output_c);
    
    \draw[edge] (spre_s) to (spre_c);
    \draw[edge] (Bproc2_s) to (Bproc2_c);
    \draw[edge] (spost_s) to (spost_c);
    
    \draw[edge] (Aproc_s) to (Aproc_c);

%
%
%
\end{tikzpicture}%
    \vspace*{-3ex}
    \caption{A trace from the event log corresponding to logging the methods in the program in Listing~\ref{lst:running-example}, depicted as intervals.
By using the \emph{Nested Call} heuristics, a hierarchical event log can be constructed from contained intervals, see also Table~\ref{tab:ex-htrace}.
For example, B.process() is contained in Main.main().}%
    \label{fig:ex-log-int}%
\end{figure}%

\newcommand{\cfs}[1]{#1}

\begin{table}[ht!]
  \centering
  \caption{A single trace in an example event log corresponding to the program
  in Listing~\ref{lst:running-example} and the intervals in
  Figure~\ref{fig:ex-log-int}.
  Each column is one event, and each row a level in the hierarchy.
  }
  \vspace*{-1.5ex}
    \begin{tabular}{
    |@{\hskip4pt}c@{\hskip4pt}
    |@{\hskip4pt}c@{\hskip4pt}
    |@{\hskip4pt}c@{\hskip4pt}
    |@{\hskip4pt}c@{\hskip4pt}
    |@{\hskip4pt}c@{\hskip4pt}|}
    \hline
    \cfs{Main.main()}    & \cfs{Main.main()}  & \cfs{Main.main()} & \cfs{Main.main()}  & \cfs{Main.main()}
    \Tstrut\Bstrut\\ \hline
    \cfs{Main.input()}   & \cfs{B.process()}  & \cfs{B.process()} & \cfs{B.process()}  & \cfs{Main.output()}
    \Tstrut\Bstrut\\ \hline
                         & \cfs{B.stepPre()}  & \cfs{B.process()} & \cfs{B.stepPost()} &
    \Tstrut\Bstrut\\ \hline
                         &                    & \cfs{A.process()} &                    &
    \Tstrut\Bstrut\\ \hline
    \end{tabular}%
  \label{tab:ex-htrace}%
  \vspace*{-3ex}
\end{table}%

\section{Model Discovery}
\label{sec:discovery}

\subsection{Discovery Framework}

Our techniques are based on the Inductive miner framework for discovering process tree models.
This framework is described in~\cite{sleemans-thesis}.
Given a set~$\treeSetOp$ of process tree operators, \cite{sleemans-thesis} defines a framework to discover models using a divide and conquer approach.
Given a log $\symEventLog$, the framework searches for possible splits of~$\symEventLog$ into sublogs~$\symEventLog_1, \ldots, \symEventLog_n$, such that these logs combined with an operator~$\treeOp$ can (at least) reproduce~$L$ again.
It then recurses on the corresponding sublogs and returns the discovered submodels.
Logs with empty traces or traces with a single activity form the base cases for this framework.
Note that the produced model can be a generalization of the log; see for example the language of the structured loop ($\treeOpLoop$).
In Table~\ref{tab:disc-exb-im}, an example run of the framework is given.

\begin{table}[htbp]
  \centering
  \caption{Example Discovery on the log $\mset{\seq{a,b,d}, \seq{a,c,d,e,d}}$.
  The rows illustrate how the discovery progresses.
  The highlights indicate the sublogs used, and relate them to the corresponding
  partial model that is discovered. }
   \vspace*{-0.3cm}
   \begin{tabular}{M{3mm} M{40mm} M{32mm}}
    \toprule
        \textbf{Step} 
      & \textbf{Discovered Model}
      & \textbf{Event Log}
      \\
    \midrule
         1 & 
         \raisebox{-.8\totalheight}{\begin{tikzpicture}[
    level 1/.style={sibling distance=5em},
    level 2/.style={sibling distance=3em},
    level 3/.style={sibling distance=1em},
    level distance = 2.2em,
    every node/.style = {draw=none, fill=none, align=center},
    sel/.style={above right,fill=ELsel}
]
  \node (seq) {$\treeOpSeq$}
      child { node {$\treeOpPlaceholder$} }
      child { node {$\treeOpPlaceholder$} }
      child { node {$\treeOpPlaceholder$} }
    ;
    
  \begin{pgfonlayer}{background}
    \node[sel,fit=(seq)] (highlight1) {};
  \end{pgfonlayer}
\end{tikzpicture}}
         &
         \begin{tabular}{|c|c|c|c|c|}
    \cline{1-3}
    \cellcolor{ELsel}$a$ & \cellcolor{ELsel}$b$ & \cellcolor{ELsel}$d$ \\
    \cline{1-3}
    \multicolumn{1}{c}{$\,$} \\
    \hline
    \cellcolor{ELsel}$a$ & \cellcolor{ELsel}$c$ & \cellcolor{ELsel}$d$
    & \cellcolor{ELsel}$e$ & \cellcolor{ELsel}$d$  \\
    \hline
\end{tabular}%
      \\
    \midrule
         2 & 
         \raisebox{-.8\totalheight}{\begin{tikzpicture}[
    level 1/.style={sibling distance=5em},
    level 2/.style={sibling distance=3em},
    level 3/.style={sibling distance=1em},
    level distance = 2.2em,
    every node/.style = {draw=none, fill=none, align=center},
    sel/.style={above right,fill=ELsel}
]
  \node (seq) {$\treeOpSeq$}
      child { node (a) {$a$} }
      child { node {$\treeOpPlaceholder$} }
      child { node {$\treeOpPlaceholder$} }
    ;
    
  \begin{pgfonlayer}{background}
    \node[sel,fit=(a)] (highlight2) {};
  \end{pgfonlayer}
\end{tikzpicture}}
         &
         \begin{tabular}{|c|c|c|c|c|}
    \cline{1-3}
    \cellcolor{ELsel}$a$ & $b$ & $d$ \\
    \cline{1-3}
    \multicolumn{1}{c}{$\,$} \\
    \hline
    \cellcolor{ELsel}$a$ & $c$ & $d$ & $e$ & $d$  \\
    \hline
\end{tabular}%
      \\
    \midrule
         3 & 
         \raisebox{-.8\totalheight}{\begin{tikzpicture}[
    level 1/.style={sibling distance=5em},
    level 2/.style={sibling distance=3em},
    level 3/.style={sibling distance=1em},
    level distance = 2.2em,
    every node/.style = {draw=none, fill=none, align=center},
    sel/.style={above right,fill=ELsel}
]
  \node (seq) {$\treeOpSeq$}
      child { node (a) {$a$} }
      child { node (xor) {$\treeOpXor$}
        child { node (b) {$b$} }
        child { node (c) {$c$} }
      }
      child { node {$\treeOpPlaceholder$} }
    ;
    
  \begin{pgfonlayer}{background}
    \node[sel,fit=(xor)(b)(c)] (highlight) {};
  \end{pgfonlayer}
\end{tikzpicture}}
         &
         \begin{tabular}{|c|c|c|c|c|}
    \cline{1-3}
    $a$ & \cellcolor{ELsel}$b$ & $d$ \\
    \cline{1-3}
    \multicolumn{1}{c}{$\,$} \\
    \hline
    $a$ & \cellcolor{ELsel}$c$ & $d$ & $e$ & $d$  \\
    \hline
\end{tabular}%
      \\
    \midrule
         4 & 
         \raisebox{-.8\totalheight}{\begin{tikzpicture}[
    level 1/.style={sibling distance=5em},
    level 2/.style={sibling distance=3em},
    level 3/.style={sibling distance=1em},
    level distance = 2.2em,
    every node/.style = {draw=none, fill=none, align=center},
    sel/.style={above right,fill=ELsel}
]
  \node (seq) {$\treeOpSeq$}
      child { node (a) {$a$} }
      child { node {$\treeOpXor$}
        child { node (b) {$b$} }
        child { node (c) {$c$} }
      }
      child { node (loop) {$\treeOpLoop$} 
        child { node (d) {$d$} }
        child { node (e) {$e$} }
      }
    ;
    
  \begin{pgfonlayer}{background}
    \node[sel,fit=(loop)(d)(e)] (highlight) {};
  \end{pgfonlayer}
\end{tikzpicture}}
         &
         \begin{tabular}{|c|c|c|c|c|}
    \cline{1-3}
    $a$ & $b$ & \cellcolor{ELsel}$d$ \\
    \cline{1-3}
    \multicolumn{1}{c}{$\,$} \\
    \hline
    $a$ & $c$ & \cellcolor{ELsel}$d$ & \cellcolor{ELsel}$e$ &
    \cellcolor{ELsel}$d$  \\
    \hline
\end{tabular}%
      \\
    \bottomrule
   \end{tabular}
   \vspace*{-0.3cm}
  \label{tab:disc-exb-im}%
\end{table}%

We present two adaptations of the framework described above: one for hierarchy (\emph{Na\"ive Discovery}, Subsection~\ref{sec:disc:naive}), and one for recursions (\emph{Recursion Aware Discovery}, Subsection~\ref{sec:disc:rad}).
Our adaptations maintain the termination guarantee, perfect fitness guarantee, language rediscoverability guarantee, and the polynomial runtime complexity from~\cite{sleemans-thesis}.
The details of the above guarantees and runtime complexity for our adaptations are detailed in Subsection~\ref{sec:guarantees}.
In our implementation, we rely on the (in)frequency based variant to enable the discovery of an 80/20 model, see also Subsection~\ref{sec:infreqmodel}.

\subsection{Naive Discovery}
\label{sec:disc:naive}

We generalize the above approach to also support hierarchy by providing the option to split hierarchical event logs and use hierarchical sequence projection.

Using our generalized discovery framework, we define a naive realization associated with the named subtree operator~$\treeOpCompOr_f$, and a slightly modified base case.
The details are given in Algorithm~\ref{alg:naive-discovery}.
In Table~\ref{tab:disc-ex3-nd}, an example run is given.

This algorithmic variant is best suited for cases where recursion does not make much sense, for example, when we are using a hierarchical event log based on structured package-class-method name hierarchies.

\vspace*{-0.8ex}%
\begin{algorithm}
    \algfontsize
    \caption{Naive Discovery (\emph{Naive})\label{alg:naive-discovery}}
    \alginout
        {A hierarchical event log $\symEventLog$}
        {A hierarchical process tree $\symTree$ such that $\symEventLog$ fits
        $\symTree$} \algdescript{Extended framework, using the named subtree operator~$\treeOpCompOr_f$.}
    \algname{$Naive$}{$\symEventLog$}
\begin{algtab}
  \algif{$\forall \symTrace \in \symEventLog : \symTrace = \symEmptyTrace$}
    \algreturn $\actSilent$ \hspace*{1em}\emph{// the log is empty or only contains empty traces} \\
  \algelseif{$\exists f \in \actAlph : \forall \symTrace \in \symEventLog : \symTrace = \seq{f}$}
    \algreturn $f$ \hspace*{1em}\emph{// the log only has a single low-level activity} \\
  \algelseif{$\exists f \in \actAlph : (\forall x \in t \in \symEventLog : x[1] = f) \land (\exists x \in t \in \symEventLog : \len{x} > 1)$}
    \emph{// all events start with $f$, and there is a lower level in the hierarchy} \\
    \algreturn $\treeOpCompOr_f(\funn{Naive}(\seqproject{\symEventLog}{1}))$\label{alg:naive-discovery:subtree}\\
  \algelse
    \emph{// normal framework cases, based on the tree semantics (Def.~\ref{def:processtree})} \\
    \label{alg:naive-discovery:opsplit}Split $L$ into into sublogs $\symEventLog_1, \ldots, \symEventLog_n$, such
    that: \\
    $\exists \treeOp \in \set{\treeOpSeq, \treeOpXor, \treeOpLoop, \treeOpPar}
    : L \subseteq \treeOp^l(L_1, \ldots, L_n)$ \hspace*{1em}\emph{// see~\cite{sleemans-thesis}}\\
    \algreturn $\treeOp(\funn{Naive}(L_1), \ldots, \funn{Naive}(L_n))$ \\
  \algend
\end{algtab}
\end{algorithm}
\vspace*{-0.8ex}%
Below are some more example logs and the models discovered.
Note that with this naive approach, the recursion in the last example is not discovered.
{\small\begin{align*}
       \funn{Naive}(\mset{\seq{f.a, f.b}, \seq{f.c}})
  =\,& \treeOpCompOr_f(\funn{Naive}(\mset{\seq{a, b}, \seq{c}})) \\
  =\,& \treeOpCompOr_f(\treeOpXor(\treeOpSeq(a, b), c)) \\
       \funn{Naive}(\mset{\seq{f.a, f.g.f.b}})
  =\,& \treeOpCompOr_f(\funn{Naive}(\mset{\seq{a, g.f.b}})) \\
  =\,& \treeOpCompOr_f(\treeOpSeq(a, \funn{Naive}(\mset{\seq{g.f.b}}) )) \\
  =\,& \treeOpCompOr_f(\treeOpSeq(a, \treeOpCompOr_g( \treeOpCompOr_f(b) ))) \\
       \funn{Naive}(\mset{\seq{f.a}, \seq{f}})
  =\,& \treeOpCompOr_f(\funn{Naive}(\mset{\seq{a}, \symEmptyTrace})) \\
  =\,& \treeOpCompOr_f(\treeOpXor(a, \tau))
\end{align*}}%

\subsection{Recursion Aware Discovery}
\label{sec:disc:rad}

In order to successfully detect recursion, we make some subtle but important changes.
We rely on two key notions: 1)~a~\emph{context path}, and 2)~\emph{delayed discovery}.
Both are explained below, using the example shown in Table~\ref{tab:disc-ex3-rd}.

This algorithmic variant is best suited for cases where recursion makes sense, for example, when we are using an event log based on the \emph{Nested Calls} hierarchy (Section~\ref{sec:heuristics}).

To detect recursion, we need to keep track of the named subtrees from the root to the current subtree.
We call the sequence of activities on such a path the \emph{context path}, notation~$\symContextPath \in \setseq{\actAlph}$.
The idea is that whenever we discover a named subtree $\treeOpCompOr_f$, and we encounter another activity $f$ somewhere in the sublogs, we can verify this recursion using~$f \in \symContextPath$.
Sublogs collected during discovery are associated with a context path, notation~$\symEventLog(\symContextPath)$.
This approach is able to deal with complex recursions (see examples at the end) and overloaded methods (see the activity naming discussion in Section~\ref{sec:def-hier-event-logs}).
In Table~\ref{tab:disc-ex3-rd}, the current context path~$\symContextPath$ at each step is shown.

\vspace*{-1ex}%
\begin{table}[htbp]
  \centering
  \caption{Example Naive Discovery on the log $\mset{\seq{f.a, f.f.b}}$.
  The rows illustrate how the discovery progresses.
  The highlights indicate the sublogs used, and relate them to the corresponding
  partial model that is discovered. 
  }
   \vspace*{-0.3cm}
   \begin{tabular}{M{3mm} M{22mm} M{20mm} M{24mm}}
    \toprule
        \textbf{Step} 
      & \textbf{Discovered Model}
      & \textbf{Event Log}
      & \textbf{Sublog View}
      \\
    \midrule
         1 & 
         \raisebox{-.8\totalheight}{\begin{tikzpicture}[
    sibling distance = 3em,
    level distance = 2.1em,
    every node/.style = {draw=none, fill=none, align=center},
    sel/.style={above right,fill=ELsel}
]
  \node (f1) {$\treeOpCompOr_{f}$}
    child { node {$\treeOpPlaceholder$} 
    };
    
  \begin{pgfonlayer}{background}
    \node[sel,fit=(f1)] (highlight) {};
  \end{pgfonlayer}
\end{tikzpicture}}
         &
         \begin{tabular}{|c|c|}
    \hline
    \cellcolor{ELsel}$f$     & \cellcolor{ELsel}$f$ \\
    \hline
    $a$     & $f$ \\
    \hline
            & $b$ \\
    \hline
\end{tabular}%
         &
         $\mset{ \seq{ f.a , f.f.b } }$
      \\
    \midrule
         2 &
         \raisebox{-.8\totalheight}{\begin{tikzpicture}[
    sibling distance = 3em,
    level distance = 2.1em,
    every node/.style = {draw=none, fill=none, align=center},
    sel/.style={above right,fill=ELsel}
]
  \node (f1) {$\treeOpCompOr_{f}$}
    child { node (seq) {$\treeOpSeq$} 
      child { node (a) {$a$} }
      child { node (f2) {$\treeOpCompOr_{f}$}
        child { node {$\treeOpPlaceholder$} }
      }
    };
    
  \begin{pgfonlayer}{background}
    \node[sel,fit=(seq)(a)(f2)] (highlight) {};
  \end{pgfonlayer}
\end{tikzpicture}}
         &
         \begin{tabular}{|c|c|}
    \hline
    $f$     & $f$ \\
    \hline
    \cellcolor{ELsel}$a$     & \cellcolor{ELsel}$f$ \\
    \hline
            & $b$ \\
    \hline
\end{tabular}%
         &
         $\mset{ \seq{ a , f.b } }$
      \\
    \midrule
         3 &
         \raisebox{-.8\totalheight}{\begin{tikzpicture}[
    sibling distance = 3em,
    level distance = 2.1em,
    every node/.style = {draw=none, fill=none, align=center},
    sel/.style={above right,fill=ELsel}
]
  \node (f1) {$\treeOpCompOr_{f}$}
    child { node (seq) {$\treeOpSeq$} 
      child { node (a) {$a$} }
      child { node (f2) {$\treeOpCompOr_{f}$}
        child { node (b) {$b$} }
      }
    };
    
  \begin{pgfonlayer}{background}
    \node[sel,fit=(b)] (highlight) {};
  \end{pgfonlayer}
\end{tikzpicture}}
         &
         \begin{tabular}{|c|c|}
    \hline
    $f$     & $f$ \\
    \hline
    $a$     & $f$ \\
    \hline
            & \cellcolor{ELsel}$b$ \\
    \hline
\end{tabular}%
         &
         $\mset{ \seq{ b } }$
      \\
    \bottomrule
   \end{tabular}
   \vspace*{-0.1cm}
  \label{tab:disc-ex3-nd}%
\end{table}%
\vspace*{-3ex}%
\begin{table}[htbp]
  \centering
  \caption{Example Recursion Aware Discovery on the log $\mset{\seq{f.a,
  f.f.b}}$ The rows illustrate how the discovery progresses.
  The highlights indicate the sublogs used, and relate them to the corresponding
  partial model that is discovered. 
  }
   \vspace*{-0.3cm}
   \begin{tabular}{M{3mm} M{22mm} M{20mm} M{24mm}}
    \toprule
        \textbf{Step} 
      & \textbf{Discovered Model}
      & \textbf{Event Log}
      & \textbf{Sublog View}
      \\
    \midrule
         1 & 
         \begin{tikzpicture}[
    sibling distance = 3em,
    level distance = 2.1em,
    every node/.style = {draw=none, fill=none, align=center},
    sel/.style={above right,fill=ELsel}
]
  \node (f1) {$\treeOpCompOr_{f}$}
    child { node {$\treeOpPlaceholder$} 
    };
    
  \begin{pgfonlayer}{background}
    \node[sel,fit=(f1)] (highlight) {};
  \end{pgfonlayer}
\end{tikzpicture}
         &
         \begin{tabular}{|c|c|}
    \hline
    \cellcolor{ELsel}$f$     & \cellcolor{ELsel}$f$ \\
    \hline
    $a$     & $f$ \\
    \hline
            & $b$ \\
    \hline
\end{tabular}%
         &
         \vspace*{-0.6cm}
         $$L = $$\vspace*{-0.5cm}
         $$\mset{ \seq{ f.a , f.f.b } }$$
         (Context~$\symContextPath = \symEmptyTrace$)
      \\
    \midrule
         2 &
         \begin{tikzpicture}[
    sibling distance = 3em,
    level distance = 2.1em,
    every node/.style = {draw=none, fill=none, align=center},
    sel/.style={above right,fill=ELsel}
]
  \node (f1) {$\treeOpCompOr_{f}$}
    child { node (seq) {$\treeOpSeq$} 
      child { node (a) {$a$} }
      child { node (f2) {$\treeOpRecurOr_{f}$}
      }
    };
    
  \begin{pgfonlayer}{background}
    \node[sel,fit=(seq)(a)(f2)] (highlight) {};
  \end{pgfonlayer}
\end{tikzpicture}
         &
         \begin{tabular}{|c|c|}
    \hline
    $x$     & $f$ \\
    \hline
    \cellcolor{ELsel}$a$     & \cellcolor{ELsel}$f$ \\
    \hline
            & $b$ \\
    \hline
\end{tabular}%
         &
         \vspace*{-0.6cm}
         $$L(\seq{f}) = $$\vspace*{-0.5cm}
         $$\mset{ \seq{ a , f.b } }$$
         (Context~$\symContextPath = \seq{f}$)
      \\
    \midrule
         3 &
         \begin{tikzpicture}[
    sibling distance = 3em,
    level distance = 2.1em,
    every node/.style = {draw=none, fill=none, align=center},
    sel/.style={above right,fill=ELsel}
]
  \node (f1) {$\treeOpCompOr_{f}$}
    child { node (xor) {$\treeOpXor$}
      child { node (b) {$b$} }
      child { node (seq) {$\treeOpSeq$} 
        child { node (a) {$a$} }
        child { node (f2) {$\treeOpRecurOr_{f}$} }
      }
    };
    
  \begin{pgfonlayer}{background}
    \node[sel,fit=(xor)(b)(seq)(a)(f2)] (highlight) {};
  \end{pgfonlayer}
\end{tikzpicture}
         &
         \begin{tabular}{|c|c|}
    \hline
    $f$     & $f$ \\
    \hline
    \cellcolor{ELsel}$a$     & \cellcolor{ELsel}$f$ \\
    \hline
            & \cellcolor{ELsel}$b$ \\
    \hline
\end{tabular}%
         &
         \vspace*{-0.6cm}
         $$L(\seq{f}) = $$\vspace*{-0.5cm}
         $$\mset{ \seq{ a , f.b }, \seq{ b } }$$
         (Context~$\symContextPath = \seq{f}$)
      \\
    \bottomrule
   \end{tabular}
   \vspace*{-0.3cm}
  \label{tab:disc-ex3-rd}%
\end{table}%

Let's have a closer look at steps~2 and~3 in Table~\ref{tab:disc-ex3-rd}.
Note how the same subtree is discovered twice.
In step~2, we detect the recursion.
And in step~3, we use the sublog after the recursion part as an additional set of traces.
The idea illustrated here is that of \emph{delayed discovery}.
Instead of immediately discovering the subtree for a named subtree $\treeOpCompOr_f$, we delay that discovery.
The corresponding sublog is associated with the current context path.
For each context path, we discover a model for the associated sublog.
During this discovery, the sublog associated with that context path may change.
If that happens, we run that discovery again on the extended sublog.
Afterwards, we insert the partial models under the corresponding named subtrees operators.

Algorithm~\ref{alg:recurse-discovery} details the recursion aware discovery algorithm;
it uses Algorithm~\ref{alg:recurse-discovery-run} for a single discovery run.
In the example of Table~\ref{tab:disc-ex3-rd}, we first discover on the complete log with the empty context path (Alg.~\ref{alg:recurse-discovery}, line~\ref{alg:recurse-discovery:root}).
In step~1, we encounter the named subtree $\treeOpCompOr_f$, and associate
$\symEventLog(\seq{f}) = \mset{\seq{a, f.b}}$, for context path $\symContextPath = \seq{f}$ 
(Alg.~\ref{alg:recurse-discovery-run}, line~\ref{alg:recurse-discovery-run:subtree}).
In step~2, we start discovery on $\symContextPath = \seq{f}$ using the sublog $\symEventLog(\seq{f})$ (Alg.~\ref{alg:recurse-discovery}, line~\ref{alg:recurse-discovery:iterate}).
In this discovery, we encounter the recursion $f \in \symContextPath$, and add $\mset{\seq{b}}$ to the sublog, 
resulting in $\symEventLog(\seq{f}) = \mset{\seq{a, f.b}, \seq{b}}$ (Alg.~\ref{alg:recurse-discovery-run}, line~\ref{alg:recurse-discovery-run:recursion}).
Finally, in step~3, we rediscover for $\symContextPath = \seq{f}$, now using the extended sublog (Alg.~\ref{alg:recurse-discovery}, line~\ref{alg:recurse-discovery:iterate}).
In this discovery run, no sublog changes anymore.
We insert the partial models under the corresponding named subtrees operators
(Alg.~\ref{alg:recurse-discovery}, line~\ref{alg:recurse-discovery:glue}) and return the result.

\vspace*{-0.8ex}%
\begin{algorithm}
    \algfontsize
    \caption{Recursion Aware Discovery (\emph{RAD})\label{alg:recurse-discovery}}
    \alginout
        {A hierarchical event log $\symEventLog$}
        {A hierarchical process tree $\symTree$ such that $\symEventLog$ fits
        $\symTree$}
    \algdescript{Extended framework, using the named subtree and recursion operators.}
    \algname{$RAD$}{$\symEventLog$}
\begin{algtab}
    \emph{// discover root model using the full event log ($\symContextPath = \symEmptyTrace$)} \\
    \label{alg:recurse-discovery:root}$\mathit{root} = $ \algcall{$RADrun$}{$\symEventLog ,\, \symEmptyTrace$} \vspace*{0.2em}\\
    
    \emph{// discover the submodels using the recorded sublogs ($\symContextPath \neq \symEmptyTrace$)} \\
    \textbf{Let} $\mathit{model}$ be an empty map, relating context paths to process trees \\
    \label{alg:recurse-discovery:iterate}\algwhiledo{$\exists \symContextPath \in \setseq{\actAlph} : \symEventLog(\symContextPath)$ changed}{
        $\mathit{model}(\symContextPath) = $ \algcall{$RADrun$}{$\symEventLog(\symContextPath),\, \symContextPath$}
     \vspace*{0.2em} }
    
    \emph{// glue the partial models~$\mathit{model}(\symContextPath)$ and root model~$\mathit{root}$ together} \\
    \algforeach{node $P$ in process tree $\mathit{root}$ (any order, including new children)}
        Let $\symContextPath = \seqc{ f }{ P' = \treeOpCompOr_f \text{ \textbf{foreach} $P'$ on the path from $\mathit{root}$ to $P$ } }$ \\
        \label{alg:recurse-discovery:glue}\algifthen{$(\exists f : P = \treeOpCompOr_f) \land \symContextPath \in \mathit{model}$}{
            Set $\mathit{model}(\symContextPath)$ as the child of $P$ 
        }
    \algend
    \algreturn $\mathit{root}$
\end{algtab}
\end{algorithm}
\vspace*{-0.8ex}%

\vspace*{-1ex}%
\vspace*{-0.8ex}%
\begin{algorithm}
    \algfontsize
    \caption{Recursion Aware Discovery - single run\label{alg:recurse-discovery-run}}
    \alginout
        {A hierarchical event log $\symEventLog$, and a context path
        $\symContextPath$}
        {A hierarchical process tree $\symTree$ such that $\symEventLog$ fits
        $\symTree$}
    \algdescript{One single run/iteration in the RAD extended framework.}
    \algname{$RADrun$}{$\symEventLog ,\, \symContextPath$}
\begin{algtab}
  \algif{$\forall \symTrace \in \symEventLog : \symTrace = \symEmptyTrace$}
    \algreturn $\actSilent$ \hspace*{1em}\emph{// the log is empty or only contains empty traces} \\
  \algelseif{$\exists f \in \actAlph : \forall \symTrace \in \symEventLog : \symTrace = \seq{f}$}
    \algreturn $f$ \hspace*{1em}\emph{// the log only has a single low-level activity} \\
    \algelseif{$\exists f \in \symContextPath : \forall x \in \symTrace \in \symEventLog : x[1] = f$\label{alg:recurse-discovery-run:check-recursion}}
    \emph{// recursion on $f$ is detected} \\
    $\symContextPath' = \symContextPath_1 \seqcat \seq{f}$ 
        \textbf{where} $(\symContextPath_1 \seqcat \seq{f} \seqcat
        \symContextPath_2) = \symContextPath$ \\
    \label{alg:recurse-discovery-run:recursion}$\symEventLog(\symContextPath') = \symEventLog(\symContextPath') \cup \seqproject{\symEventLog}{1}$ $\quad$ \emph{// $\seqproject{\symEventLog}{1}$ is added to the sublog for $\symContextPath'$} \\
    \algreturn $\treeOpRecurOr_f$ \\
  \algelseif{$\exists f \in \actAlph: (\forall x \in t \in \symEventLog : x[1] = f) \land (\exists x \in t \in \symEventLog : \len{x} > 1)$}
      \emph{// discovered a named subtree $f$, note that $f \notin \symContextPath$ since line~\ref{alg:recurse-discovery-run:check-recursion} was false} \\
      \label{alg:recurse-discovery-run:subtree}$\symEventLog(\symContextPath \seqcat \seq{f}) = \seqproject{\symEventLog}{1}$ $\qquad$ \emph{// $\seqproject{\symEventLog}{1}$ is associated with 
          $\symContextPath' = \symContextPath \seqcat \seq{x}$} \\
      \algreturn $\treeOpCompOr_f$ \\
  \algelse
    \emph{// normal framework cases, based on the tree semantics (Def.~\ref{def:processtree})} \\
    \label{alg:recurse-discovery-run:opsplit}Split $L$ into into sublogs $\symEventLog_1, \ldots, \symEventLog_n$, such
    that: \\
    $\exists \treeOp \in \set{\treeOpSeq, \treeOpXor, \treeOpLoop, \treeOpPar}
    : L \subseteq \treeOp^l(L_1, \ldots, L_n)$ \hspace*{1em}\emph{// see~\cite{sleemans-thesis}}\\
    \algreturn $\treeOp(\funn{RADrun}(\symEventLog_1, \symContextPath), \ldots,
    \funn{RADrun}(\symEventLog_n, \symContextPath))$ \\
  \algend
\end{algtab}
\end{algorithm}
\vspace*{-0.8ex}%
Below are some more example logs, the models discovered, and the sublogs associated with the involved context paths.
Note that with this approach, complex recursions are also discovered.
{\small
\begin{align*}
    \funn{RAD}(\mset{\seq{f.a, f.g.f.b}}) 
    =\,&
    \treeOpCompOr_f(\treeOpXor(b, \treeOpSeq(a, \treeOpCompOr_g( \treeOpRecurOr_f )))) \\
    \text{\textbf{where }} \symEventLog(\seq{f}) =\,& \mset{\seq{b}, \seq{a, g.f.b}} \\
                           \symEventLog(\seq{f,g}) =\,& \mset{\seq{f.b}} \\
    \funn{RAD}(\mset{\seq{f.g.g.a}, \seq{f.g.f.g.a}}) 
    =\,& 
    \treeOpCompOr_f(\treeOpCompOr_g(\treeOpXor(a, \treeOpRecurOr_f, \treeOpRecurOr_g))) \\
    \text{\textbf{where }} \symEventLog(\seq{f}) =\,& \mset{\seq{g.g.a}, \seq{g.f.g.a}, \seq{g.a}} \\
                           \symEventLog(\seq{f,g}) =\,& \mset{\seq{g.a}, \seq{f.g.a}, \seq{a}} \\
    \funn{RAD}(\mset{\seq{f.f}}) 
    =\,& 
    \treeOpCompOr_f(\treeOpXor(\treeOpRecurOr_f, \tau)) \\
    \text{\textbf{where }} \symEventLog(\seq{f}) =\,& \mset{\seq{f}, \symEmptyTrace}
\end{align*}}%

\subsection{Termination, Perfect Fitness, Language Rediscoverability, and Runtime Complexity}
\label{sec:guarantees}

Our Na\"ive Discovery and Recursion Aware Discovery adaptations of the  framework described \cite{sleemans-thesis} maintains the termination guarantee, perfect fitness guarantee, language rediscoverability guarantee, and the polynomial runtime complexity.
We will discuss each of these properties using the simplified theorems and proofs from~\cite{sleemans2013-im}.

\subsubsection{Termination Guarantee}

The termination guarantee is based on the proof for~\cite[Theorem~2, Page~7]{sleemans2013-im}.
The basis for the termination proof relies on the fact that the algorithm only performs finitely many recursions.
For the standard process tree operators in the original framework, it is shown that the log split operator only yields finitely many sublogs.
Hence, for our adaptations, we only have to show that the new hierarchy and recursion cases only yield finitely many recursions.

\begin{theorem}
Na\"ive Discovery terminates.
\end{theorem}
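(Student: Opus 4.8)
The plan is to exhibit a well-founded termination measure and to show that every recursive invocation of $\funn{Naive}$ strictly decreases it. Since the excerpt tells us that the original (flat) Inductive Miner framework of~\cite{sleemans-thesis} already terminates — its log-split step yields only finitely many sublogs and the induced recursion is well-founded, as established in~\cite[Theorem~2]{sleemans2013-im} — it suffices to account for the single case that $\funn{Naive}$ adds over that framework, namely the named-subtree case returning $\treeOpCompOr_f(\funn{Naive}(\seqproject{\symEventLog}{1}))$. I would therefore combine the original framework's termination measure with a measure of hierarchy depth and argue by a lexicographic descent.

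Concretely, let $\nu(\symEventLog)$ denote the termination measure used for the original framework, where each full event label $x \in \setseq{\actAlph}$ is treated as a single atomic symbol, and let $\|\symEventLog\|$ denote the hierarchy depth, i.e. the length of the longest event label occurring in $\symEventLog$. I would order logs by the lexicographic measure $\mu(\symEventLog) = (\|\symEventLog\|, \nu(\symEventLog))$. The two base cases (all traces equal to $\symEmptyTrace$, or a single low-level activity) return immediately and perform no recursion. In the named-subtree case, $\funn{Naive}$ recurses only on $\seqproject{\symEventLog}{1}$; since hierarchical projection strips the top activity from every event, the longest label shrinks by exactly one, so $\|\seqproject{\symEventLog}{1}\| = \|\symEventLog\| - 1$ and the first component of $\mu$ strictly decreases. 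In the remaining (normal framework) case, $\funn{Naive}$ recurses on sublogs $\symEventLog_1, \ldots, \symEventLog_n$ produced by an operator split; each such sublog is built from (sub)sequences of the events of $\symEventLog$, so no new hierarchy level is introduced and $\|\symEventLog_i\| \leq \|\symEventLog\|$, while the original framework's guarantee gives $\nu(\symEventLog_i) < \nu(\symEventLog)$. Hence either the first component of $\mu$ drops, or it is preserved and the second component drops; in both situations $\mu$ strictly decreases.

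Because $\|\symEventLog\|$ is a finite non-negative integer bounded by the maximal label length of the input log, and $\nu$ ranges over a well-founded order, the lexicographic product is well-founded, so no infinite descending chain of recursive calls can exist and $\funn{Naive}$ terminates. The main obstacle I anticipate lies in the normal-framework case rather than in the new $\treeOpCompOr_f$ case: I must argue carefully that treating full event labels $x$ as atomic symbols lets the measure $\nu$ and the descent of~\cite[Theorem~2]{sleemans2013-im} apply verbatim, and that an operator split can never lengthen an event label (so that $\|\cdot\|$ is genuinely non-increasing there, guaranteeing the lexicographic ordering is respected). Once that reduction to the flat argument is justified, the depth decrement for the hierarchy case is immediate from the definition of $\seqproject{\cdot}{1}$, and it is the only new ingredient needed.
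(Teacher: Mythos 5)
Your proposal is correct and follows essentially the same route as the paper: both reduce to the termination guarantee of the original framework of~\cite{sleemans2013-im} and observe that the only new case, the named-subtree recursion on $\seqproject{\symEventLog}{1}$, strictly decreases the (finite) hierarchy depth $\|\symEventLog\|$. Your lexicographic measure $(\|\symEventLog\|, \nu(\symEventLog))$ merely makes explicit the combination of the two descent arguments that the paper leaves informal, which is a welcome tightening rather than a different approach.
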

\begin{proof}
Consider the named subtree case on Algorithm~\ref{alg:naive-discovery}, line~\ref{alg:naive-discovery:subtree}.
Observe that the log $\symEventLog$ has a finite depth, i.e., a finite number of levels in the hierarchy.
Note that the sequence projection~$\seqproject{\symEventLog}{1}$ yields 
strictly smaller event logs, i.e, the number of levels in the hierarchy strictly decreases.
We can conclude that the named subtree case for the Na\"ive Discovery yields only finitely many recursions.
Hence, the Na\"ive Discovery adaptation maintains the termination guarantee of~\cite[Theorem~2, Page~7]{sleemans2013-im}.
\end{proof}

\begin{theorem}
Recursion Aware Discovery terminates.
\end{theorem}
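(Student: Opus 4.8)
The plan is to mirror the argument used for Na\"ive Discovery, but now split it into two levels: first show that a single discovery run $\funn{RADrun}$ (Algorithm~\ref{alg:recurse-discovery-run}) always terminates, and then show that the outer loop of $\funn{RAD}$ (Algorithm~\ref{alg:recurse-discovery}, line~\ref{alg:recurse-discovery:iterate}) iterates only finitely often.

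For a single run, I would observe that the only recursive calls in $\funn{RADrun}$ occur in the normal framework case (line~\ref{alg:recurse-discovery-run:opsplit}), where the log is split and the procedure recurses on the sublogs with the \emph{same} context path. The named subtree case (line~\ref{alg:recurse-discovery-run:subtree}) and the recursion case (line~\ref{alg:recurse-discovery-run:recursion}) both return a leaf ($\treeOpCompOr_f$ resp.\ $\treeOpRecurOr_f$) without recursing; they only record or extend a sublog for later use. Hence the recursion tree of a single $\funn{RADrun}$ is exactly that of the original framework with two extra leaf base cases, and termination follows directly from the finiteness of the log split together with \cite[Theorem~2, Page~7]{sleemans2013-im}, precisely as in the Na\"ive case above.

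The substantive part is the outer loop, and here I would establish two finiteness facts. First, there are only finitely many distinct context paths: the named subtree case extends $\symContextPath$ to $\symContextPath \seqcat \seq{f}$ only when the recursion check on line~\ref{alg:recurse-discovery-run:check-recursion} has failed, i.e.\ when $f \notin \symContextPath$; consequently every context path consists of pairwise distinct activities and has length at most $\len{\actAlph}$, so there are finitely many. Second, each sublog $\symEventLog(\symContextPath)$ can change only finitely often. To see this I would maintain the invariant that every event occurring in any sublog belongs to the finite set $E$ of all suffixes $\seqproject{x}{i}$ of events $x$ of the original log, and that every trace has at most $N$ events, where $N$ is the maximal number of events of a trace in the original log. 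This invariant holds initially, and it is preserved by the log split (which never increases the number of events of a trace), by the projection $\seqproject{\cdot}{1}$ (which maps events into $E$ and preserves the number of events per trace), and by the additions in the recursion and named subtree cases (which add projected copies of already-valid traces). Because $E$ and $N$ are finite, all sublogs draw from a single finite universe of traces, and since $\cup$ only adds traces, each $\symEventLog(\symContextPath)$ grows monotonically inside this finite universe and therefore stabilizes after finitely many changes.

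Combining the two facts, only finitely many context paths exist and each associated sublog changes finitely often, so the loop guard (there exists a context path whose associated sublog changed) eventually fails and the outer loop terminates; together with termination of each $\funn{RADrun}$ this yields termination of $\funn{RAD}$. I expect the invariant preservation under the feedback between the recursion case and re-discovery to be the main obstacle: the recursion case feeds projected traces from a deeper context path back into a strictly shorter prefix, which is then re-discovered and may again push traces to deeper contexts, so the delicate point is to argue that this feedback neither introduces events outside $E$ nor lengthens traces beyond $N$, and that the distinct-activity property of context paths prevents the nesting from deepening without bound.
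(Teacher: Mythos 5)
Your proof is correct and reaches the paper's conclusion, but via genuinely different justifications for the two key finiteness facts. Like the paper, you reduce termination of the outer loop to (a) finitely many context paths and (b) finitely many changes per sublog, yet you argue both differently. For (a), you use the algorithmic invariant that the check on line~\ref{alg:recurse-discovery-run:check-recursion} forbids repeated activities in $\symContextPath$, giving the bound $\len{\actAlph}$; the paper instead bounds context-path length by the depth $\|\symEventLog\|$ of the hierarchical log, since each extension of $\symContextPath$ comes with one more application of $\seqproject{\cdot}{1}$. For (b), you argue monotone growth inside a finite universe of traces built from suffix-projections of the original events; the paper instead observes that anything fed back into $\symEventLog(\symContextPath')$ at the recursion case (line~\ref{alg:recurse-discovery-run:recursion}) is a strict projection of material already bounded by $\symEventLog(\symContextPath')$, so each feedback round strictly decreases hierarchy depth and can occur at most $\|\symEventLog\|$ times. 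Your route is more elementary and more robust --- it needs no bookkeeping about how depth flows through the recursion-case feedback, only your explicit invariant that splits and projections never create new events or longer traces (a fact the paper leaves implicit). What the paper's route buys is a sharper quantitative bound: the number of changes per sublog is $O(\|\symEventLog\|)$, which the paper reuses verbatim in its polynomial runtime-complexity argument, whereas your finite-universe argument only yields an exponential bound on the number of changes --- enough for termination, not for that later complexity claim. Your explicit first step (a single $\funn{RADrun}$ terminates because the named-subtree and recursion cases are non-recursive leaves, so only the standard operator case recurses) is left implicit in the paper and is a small improvement. Finally, note that both you and the paper implicitly assume the union on line~\ref{alg:recurse-discovery-run:recursion} is idempotent, i.e., re-adding already-present traces does not count as a change; under a multiset-sum reading neither argument would go through, so this is a shared assumption rather than a gap specific to your proof.
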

\begin{proof}
Consider the named subtree and recursion cases in Algorithm~\ref{alg:recurse-discovery-run} on
lines~\ref{alg:recurse-discovery-run:recursion} and~\ref{alg:recurse-discovery-run:subtree}.
Note that, by construction, for all the cases where we end up in Algorithm~\ref{alg:recurse-discovery-run}, line~\ref{alg:recurse-discovery-run:recursion}, 
we know that $\symEventLog$ is derived from, and bounded by, $\symEventLog(\symContextPath')$ as follows: 
$\symEventLog \subseteq \setc{ \seqproject{\symEventLog'}{i} }{ \symEventLog' \subseteq \symEventLog(\symContextPath') \land 0 \leq i \leq \|\symEventLog(\symContextPath')\| }$.
Observe that the log $\symEventLog$ has a finite depth, i.e., a finite number of levels in the hierarchy.
Note that the sequence projection~$\seqproject{\symEventLog}{1}$ yields strictly smaller event logs, 
i.e, the number of levels in the hierarchy strictly decreases.
Hence, we can conclude that $\symEventLog(\symContextPath')$ only changes finitely often.
Since $\symContextPath$ is derived from the log depth, we also have a finitely many sublogs 
$\symEventLog(\symContextPath')$ that are being used.
Hence, the loop on Algorithm~\ref{alg:recurse-discovery}, line~\ref{alg:recurse-discovery:iterate} terminates, 
and thus the Recursion Aware Discovery adaptation maintains the termination guarantee of~\cite[Theorem~2, Page~7]{sleemans2013-im}.
\end{proof}

\subsubsection{Perfect Fitness}

As stated in the introduction, we want the discovered model to fit the actual behavior.
That is, we want the discovered model to at least contain all the behavior in the event log.
The perfect fitness guarantee states that all the log behavior is in the discovered model, 
and we proof this using the proof for~\cite[Theorem~3, Page~7]{sleemans2013-im}.
The fitness proof is based on induction on the log size\footnote{Formally, the original induction is on the log size plus a counter parameter.
However, for our proofs, we can ignore this counting parameter.}.
As induction hypothesis, we assume that for all sublogs, the discovery framework returns a fitting model, and then prove that the step maintains this property.
That is, for all sublogs~$\symEventLog'$ we have a corresponding submodel $\symTree'$ such that $\symEventLog' \subseteq \treeLan(\symTree')$.
For our adaptations, it suffices to show that the named subtree and recursion operators do not violate this assumption.

\begin{theorem}
Na\"ive Discovery returns a process model that fits the log.
\end{theorem}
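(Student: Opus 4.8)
The plan is to follow the inductive scheme of \cite[Theorem~3, Page~7]{sleemans2013-im}: induct on the size of the hierarchical log, take as induction hypothesis that every recursive call $\funn{Naive}(\symEventLog')$ returns a tree $\symTree'$ with $\symEventLog' \subseteq \treeLan(\symTree')$, and check that each branch of Algorithm~\ref{alg:naive-discovery} preserves fitness. The two base cases are immediate: the all-empty-traces branch returns $\actSilent$ with $\treeLan(\actSilent) = \set{\symEmptyTrace}$, and the single low-level activity branch returns $f$ with $\treeLan(f) = \set{\seq{f}}$. The final ``normal framework'' branch is exactly the situation already covered by the cited proof: the log split guarantees $L \subseteq \treeOp^l(L_1,\ldots,L_n)$, the hypothesis gives $L_i \subseteq \treeLan(\symTree_i)$, and since $\treeLan$ is built from monotonic join functions we get $L \subseteq \treeOp^l(\treeLan(\symTree_1),\ldots,\treeLan(\symTree_n)) = \treeLan(\treeOp(\symTree_1,\ldots,\symTree_n))$. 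Hence the only genuinely new obligation is the named-subtree branch on line~\ref{alg:naive-discovery:subtree}.

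For that branch I would first record the structural fact that Na\"ive Discovery never emits a recursion leaf $\treeOpRecurOr$, so the child tree $\symTree' = \funn{Naive}(\seqproject{\symEventLog}{1})$ and its language $\treeLan(\symTree')$ contain no recursion markers. I would then prove, by a short secondary induction on the length of an event $x$, that $\fun{\psi^l_f}{x} = \set{\seq{x}}$ whenever $x$ is free of the symbol $\treeOpRecurOr_f$: the base uses $\fun{\psi^l_f}{\symEmptyTrace} = \set{\symEmptyTrace}$ and the step uses $\fun{\psi^l_f}{a.x} = \setc{a.t'}{t' \in \fun{\psi^l_f}{x}}$, so the recursion-resolving branch $\fun{\psi^l_f}{\treeOpRecurOr_f} = \treeOpCompOr^l_f(L)$ is never triggered. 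Substituting this into the definition of $\treeOpCompOr^l_f$ collapses it to $\treeOpCompOr^l_f(\treeLan(\symTree')) = f.\treeLan(\symTree')$; that is, the named-subtree operator merely prepends the level name $f$ to every trace of the child language.

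With this identity the inclusion is direct. Take any trace $t \in \symEventLog$. The guard of the branch ensures every event of $t$ begins with $f$, so $t$ decomposes as $t = f.\seqproject{t}{1}$, and by definition of hierarchical projection $\seqproject{t}{1} \in \seqproject{\symEventLog}{1}$. The induction hypothesis gives $\seqproject{\symEventLog}{1} \subseteq \treeLan(\symTree')$, hence $\seqproject{t}{1} \in \treeLan(\symTree')$ and therefore $t = f.\seqproject{t}{1} \in f.\treeLan(\symTree') = \treeLan(\treeOpCompOr_f(\symTree'))$. As $t$ was arbitrary, $\symEventLog \subseteq \treeLan(\treeOpCompOr_f(\symTree'))$, closing the case and the induction.

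I expect the main obstacle to be the careful handling of the $\psi^l_f$ function: one must justify precisely that, because the discovered child tree is recursion-free, $\treeOpCompOr^l_f$ reduces to the plain prefixing map $f.(\cdot)$ rather than unfolding a recursion. This is exactly the reason Na\"ive Discovery cannot recover recursion (matching the remark following the worked examples), so the same observation both drives this proof and explains the algorithm's limitation. Everything else is routine manipulation of the hierarchy-concatenation and projection notation of Section~\ref{sec:def-hier-event-logs}.
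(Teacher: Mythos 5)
Your proof follows the same route as the paper's own: the induction of \cite[Theorem~3, Page~7]{sleemans2013-im}, with the named-subtree branch on line~\ref{alg:naive-discovery:subtree} of Algorithm~\ref{alg:naive-discovery} as the only new obligation, discharged by applying the induction hypothesis to $\seqproject{\symEventLog}{1}$. The paper disposes of that branch with ``simple code inspection,'' so your elaboration is strictly more explicit than the published proof: in particular the secondary induction showing $\psi^l_f(x) = \set{\seq{x}}$ for events $x$ free of recursion markers, hence that $\treeOpCompOr^l_f$ collapses to prefixing by $f$ on the recursion-free child language, is exactly the content the paper leaves implicit, and it is the right key observation.

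One step, however, is asserted where it can actually fail --- in your write-up and, silently, in the paper's as well. The guard of the named-subtree branch only requires that every event start with $f$ and that \emph{some} event have depth greater than one; it does not exclude traces that mix the bare event $\seq{f}$ with deeper events. Since hierarchical projection drops events that become empty, such a trace violates your decomposition $\symTrace = f.(\seqproject{\symTrace}{1})$. Concretely, on $\symEventLog = \mset{\seq{f.a,\, f}}$ the branch fires, $\seqproject{\symEventLog}{1} = \mset{\seq{a}}$, and the algorithm returns $\treeOpCompOr_f(a)$ with $\treeLan(\treeOpCompOr_f(a)) = \set{\seq{f.a}}$, which does not contain the log trace $\seq{f.a,\, f}$: perfect fitness fails there. (The paper's worked example $\funn{Naive}(\mset{\seq{f.a}, \seq{f}}) = \treeOpCompOr_f(\treeOpXor(a, \actSilent))$ only covers traces consisting \emph{entirely} of bare $f$'s, which project to $\symEmptyTrace$ and are absorbed by the $\actSilent$ branch; mixed traces are not covered.) So your argument is at least as good as the paper's, but to make it airtight you must either add the hypothesis that no trace mixes bare $\seq{f}$ events with deeper ones, or observe that the theorem only holds for such logs; this is an edge case of the algorithm and the $\treeOpCompOr_f$ semantics themselves, not something a different proof strategy could repair. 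A similar caution applies to the base case of your secondary induction: the identity $a.(\symEmptyTrace) = \seq{a}$ needed there does not follow from the literal definition of hierarchy concatenation (which would give $a.\symEmptyTrace = \symEmptyTrace$), but it is the reading forced by the paper's own language examples, so you should state it as such.
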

\begin{proof}
By simple code inspection on Algorithm~\ref{alg:naive-discovery}, line~\ref{alg:naive-discovery:subtree} and using the induction hypothesis on~$\seqproject{\symEventLog}{1}$, we can see that for the named subtree operator we return a process model that fits the log~$\symEventLog$.
Since this line is the only adaptation, the Na\"ive Discovery adaptation maintains the perfect fitness guarantee of~\cite[Theorem~3, Page~7]{sleemans2013-im}.
\end{proof}

\begin{theorem}
Recursion Aware Discovery returns a process model that fits the log.
\end{theorem}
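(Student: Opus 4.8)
The plan is to reuse the induction-on-log-size argument that established perfect fitness for the original framework in \cite[Theorem~3, Page~7]{sleemans2013-im} and for Na\"ive Discovery above, adapting it to the two new branches of Algorithm~\ref{alg:recurse-discovery-run} (the named-subtree case on line~\ref{alg:recurse-discovery-run:subtree} and the recursion case on line~\ref{alg:recurse-discovery-run:recursion}) and to the delayed, iterate-until-stable discovery of Algorithm~\ref{alg:recurse-discovery}. As induction hypothesis I assume that every sublog produced by a normal operator split, and every projection $\seqproject{\symEventLog}{1}$, is fitted by the corresponding submodel; the silent-activity and single-activity branches are the fitting base cases exactly as before. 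The two obstacles specific to RAD are (i) that the subtree under a $\treeOpCompOr_f$ is not built in place but assembled later from $\mathit{model}(\symContextPath \seqcat \seq{f})$ on line~\ref{alg:recurse-discovery:glue}, and (ii) that a recursion leaf $\treeOpRecurOr_f$ carries no behavior locally --- its behavior is ``folded back'' into the ancestor sublog $\symEventLog(\symContextPath')$ on line~\ref{alg:recurse-discovery-run:recursion} and must instead be recovered through the recursive language unfolding $\psi^l_f$.

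First I would prove a per-run invariant. Fix a stabilized context path $\symContextPath$ with its stabilized sublog $\symEventLog(\symContextPath)$; stabilization is finite by the termination theorem, so the while-loop on line~\ref{alg:recurse-discovery:iterate} visits each $\symContextPath$ only finitely often and the sublogs reach a fixpoint. Running the ordinary-framework argument on $\funn{RADrun}(\symEventLog(\symContextPath),\symContextPath)$ shows that each trace of $\symEventLog(\symContextPath)$ is reproduced by the returned tree once we read a $\treeOpCompOr_f$ leaf as a deferred reference to $\mathit{model}(\symContextPath \seqcat \seq{f})$ (which fits $\seqproject{\symEventLog}{1}$ by the hypothesis, via line~\ref{alg:recurse-discovery-run:subtree}) and a $\treeOpRecurOr_f$ leaf as a deferred reference to the run on the ancestor path $\symContextPath'$ that received the folded subtrace. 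The key local fact is that both deferred cases strip exactly one hierarchy level via $\seqproject{\symEventLog}{1}$, matching the single prefix symbol that $\treeOpCompOr^{l}_{f}$ and $\psi^l_f$ re-attach in the semantics of Definition~\ref{def:hier-process-tree}.

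Next I would glue and discharge the deferrals by induction on the hierarchy depth $\|\symEventLog\|$, which is finite. Take any input trace $\symTrace \in \symEventLog$ and follow its processing. Whenever the recursion case fires, the projected remainder $\seqproject{\symEventLog}{1}$ is added to $\symEventLog(\symContextPath')$ and is therefore reproduced by the (re)run on $\symContextPath'$; in the assembled model this corresponds exactly to expanding the emitted $\treeOpRecurOr_f$ through $\psi^l_f(\treeOpRecurOr_f) = \treeOpCompOr^{l}_{f}(L)$, i.e. to one recursive unfolding of the $f$-named subtree. Since each unfolding consumes one level of prefix, the nesting depth of the folded subtraces is strictly smaller, so the induction terminates and every level of $\symTrace$ is rebuilt, giving $\symTrace \in \treeLan(\mathit{root})$. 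Because $\symTrace$ was arbitrary, $\symEventLog \subseteq \treeLan(\mathit{root})$, which is the claimed perfect fitness.

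I expect the main obstacle to be matching the two distinct fixpoints: the iterative growth of the sublogs $\symEventLog(\symContextPath)$ under the while-loop on line~\ref{alg:recurse-discovery:iterate} versus the recursive fixpoint that defines $\treeOpCompOr^{l}_{f}$ through $\psi^l_f$. The crux is to show that every folded-back subtrace corresponds to exactly one language unfolding and, in the direction that fitness actually needs, that the language never under-generates relative to the stabilized sublogs. Establishing this correspondence cleanly --- rather than by an ad hoc trace chase --- is where the real care is needed, and it is made tractable only by the finiteness of $\|\symEventLog\|$ guaranteed by the termination theorem.
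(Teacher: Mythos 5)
Your proposal is correct and follows essentially the same route as the paper's proof: induction inherited from the original framework, the named-subtree case discharged via the induction hypothesis on $\seqproject{\symEventLog}{1}$ plus the gluing step of Algorithm~\ref{alg:recurse-discovery}, and the recursion case discharged by noting that the folded-back subtrace satisfies $\seqproject{\symEventLog}{1} \subseteq \symEventLog(\symContextPath')$ and is therefore fitted by the stabilized $\mathit{model}(\symContextPath')$ that ends up under the corresponding $\treeOpCompOr_f$. The only difference is one of rigor: your explicit induction on $\|\symEventLog\|$ matching each folded-back subtrace to exactly one unfolding of $\psi^l_f$ spells out a correspondence that the paper compresses into its final ``hence the recursion operator returns a fitting model'' step, so it is a tightening of the same argument rather than a different one.
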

\begin{proof}
Consider the named subtree case on 
Algorithm~\ref{alg:recurse-discovery-run}, line~\ref{alg:recurse-discovery-run:subtree}.
Using the induction hypothesis on~$\symEventLog(\symContextPath \seqcat \seq{f}) = \seqproject{\symEventLog}{1}$, we know that $model(\symContextPath \seqcat \seq{f})$ will fit $\symEventLog(\symContextPath \seqcat \seq{f})$.
By Algorithm~\ref{alg:recurse-discovery}, line~\ref{alg:recurse-discovery:glue}, we know that $\funn{model}(\symContextPath \seqcat \seq{f})$ will be the child of~$\treeOpCompOr_f$.
Hence, for the named subtree operator we return a process model that fits the log~$\symEventLog$.

Consider the recursion case on 
Algorithm~\ref{alg:recurse-discovery-run}, line~\ref{alg:recurse-discovery-run:recursion}.
Since $f \in \symContextPath$, we know there must exist a named subtree~$\treeOpCompOr_f$ corresponding to the recursive operator~$\treeOpRecurOr_f$.
Due to Algorithm~\ref{alg:recurse-discovery}, line~\ref{alg:recurse-discovery:iterate} and the induction hypthesis, 
we know that at the end $\funn{model}(\symContextPath')$ fits $\symEventLog(\symContextPath')$ (i.e., $\symEventLog(\symContextPath') \subseteq \treeLan(\funn{model}(\symContextPath'))$).
Since, by construction, we know $\seqproject{\symEventLog}{1} \subseteq \symEventLog(\symContextPath')$,
$\funn{model}(\symContextPath')$ also fits $\seqproject{\symEventLog}{1}$.
By Algorithm~\ref{alg:recurse-discovery}, line~\ref{alg:recurse-discovery:glue}, we know that $\treeOpRecurOr_f$ will be in the subtree of~$\treeOpCompOr_f$.
Hence, for the recursion operator we return a process model that fits the log~$\symEventLog$.

We conclude that the Recursion Aware Discovery adaptation maintains the perfect fitness guarantee of~\cite[Theorem~3, Page~7]{sleemans2013-im}.
\end{proof}

\subsubsection{Language Rediscoverability}

The language rediscoverability property tells whether and under which conditions
a discovery algorithm can discover a model that is language-equivalent to the original process.
That is, given a `system model'~$\symTree$ and an event log~$\symEventLog$ that
is complete w.r.t.
$\symTree$ (for some notion of completeness),
then we rediscover a model~$\symTree'$ such that $\treeLan(\symTree') = \treeLan(\symTree)$.

We will show language rediscoverability in several steps.
First, we will define the notion of language complete logs.
Then, we define the class of models that can be language-rediscovered.
And finally, we will detail the language rediscoverability proofs.

\paragraph{Language Completeness}
Language rediscoverability holds for directly-follows complete logs.
We adapt this notion of directly-folllows completeness from~\cite{sleemans2013-im}
by simply applying the existing definition to hierarchical event logs:

\newcommand{\opDFComplete}{\mathbin{\diamond_{df}}}

\begin{definition}[Directly-follows completeness]
Let $\fun{Start}{\symEventLog}$ and $\fun{End}{\symEventLog}$ denote the set of 
start and end symbols amongst all traces, respectively.
A log $\symEventLog$ is directly-follows complete to a model $\symTree$, 
denoted as $\symEventLog \opDFComplete \symTree$, iff:
\begin{enumerate}
  \item $\seq{\ldots, x, y, \ldots} \in \treeLan(\symTree) \limpl \seq{\ldots, x, y, \ldots} \in \symEventLog$;
  \item $\fun{Start}{\treeLan(\symTree)} \subseteq \fun{Start}{\symEventLog}$;
  \item $\fun{End}{\treeLan(\symTree)} \subseteq \fun{End}{\symEventLog}$; and
  \item $\Sigma(\symTree) \subseteq \Sigma(\symEventLog)$.

\end{enumerate}
Note that directly-follows completeness is defined over all levels of a hierarchical log.
\end{definition}

\paragraph{Class of Language-Rediscoverable Models}
We will prove language rediscoverability for the following class of models.
Let~$\Sigma(\symTree)$ denote the set of activities in~$\symTree$.
A model~$\symTree$ is in the class of language rediscoverable models iff
for all nodes $\treeOp(\symTree_1, \ldots, \symTree_n)$ in $\symTree$ we have:
\begin{enumerate}
  \item No duplicate activities: 
  $\forall i \neq j : \Sigma(\symTree_i) \cap \Sigma(\symTree_j) = \emptyset$;
  \item In the case of a loop, the sets of start and end activities of the first branch must be disjoint: \newline
  $\treeOp = \treeOpLoop \limpl \funn{Start}(\treeLan(\symTree_1)) \cap \funn{End}(\treeLan(\symTree_1)) = \emptyset$
  \item No taus are allowed: $\forall i \leq n : \symTree_i \neq \tau$;
  \item In the case of a recursion node~$\treeOpRecurOr_f$, 
  there exists a corresponding named subtree node~$\treeOpCompOr_f$
  on the path from~$\symTree$ to~$\treeOpRecurOr_f$.
\end{enumerate}
Note that the first three criteria follow directly from the language rediscoverability class
from~\cite{sleemans2013-im}.
The last criteria is added to have well-defined recursions in our hierarchical process trees.

\paragraph{Language-Rediscoverable Guarantee}

The language rediscoverability guarantee is based on the proof for~\cite[Theorem~14, Page~16]{sleemans2013-im}.
The proof in~\cite{sleemans2013-im} is based on three lemmas:
\begin{itemize}
  \item \cite[Lemma~11, Page~15]{sleemans2013-im}  guarantees that any root process tree operator is rediscovered;
  \item \cite[Lemma~12, Page~16]{sleemans2013-im} guarantees that the base cases can be rediscovered; and
  \item \cite[Lemma~13, Page~16]{sleemans2013-im} guarantees that for all process tree operators the log is correctly subdivided.
\end{itemize}

For our adaptations, we have to show:
\begin{enumerate}
  \item Our recursion base case maintains \cite[Lemma~12, Page~16]{sleemans2013-im}; and
  \item Our named subtree operator maintains \cite[Lemma~11, Page~15]{sleemans2013-im} 
and \cite[Lemma~13, Page~16]{sleemans2013-im}.
\end{enumerate}

\begin{theorem}\label{thm:lang-redisc:naive}
Na\"ive Discovery preserves language rediscoverability.
\end{theorem}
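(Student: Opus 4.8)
The plan is to exploit the fact that \emph{Naive Discovery} extends the framework of~\cite{sleemans2013-im} by the single new operator~$\treeOpCompOr_f$ (Algorithm~\ref{alg:naive-discovery}, line~\ref{alg:naive-discovery:subtree}) and introduces no recursion leaf~$\treeOpRecurOr_f$. Of the two adaptation obligations listed above, the recursion base case obligation on~\cite[Lemma~12, Page~16]{sleemans2013-im} is therefore vacuous here, and it remains to show that the named subtree operator maintains~\cite[Lemma~11, Page~15]{sleemans2013-im} (the correct root cut is detected) and~\cite[Lemma~13, Page~16]{sleemans2013-im} (the log is correctly subdivided). Once both hold for~$\treeOpCompOr_f$, the induction of~\cite[Theorem~14, Page~16]{sleemans2013-im} applies verbatim, since the base cases coincide with the original ones read over single-activity hierarchical events.

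First, for the detection step I would observe that, because Naive Discovery uses no recursion leaves, every $\psi^l_f$ merely re-prepends its event and the semantics collapse to $\treeLan(\treeOpCompOr_f(\symTree)) = f.\treeLan(\symTree)$, so each trace has the shape $f.t$ with $t \in \treeLan(\symTree)$. Consequently every event of every trace carries $f$ at level~$1$; and since the rediscoverable class forbids taus, $\symTree$ emits only non-empty traces, so at least one event has depth~$>1$. This is precisely the guard on line~\ref{alg:naive-discovery:subtree}, while the two base-case guards fail (traces are non-empty and not all equal to $\seq{f}$), so the named subtree case fires. For the converse direction I would argue that when the root is instead a standard operator $\treeOp(\symTree_1,\ldots,\symTree_n)$, the no-duplicate-activities criterion forces the level-$1$ activities of distinct branches to differ (and, for~$\treeOpLoop$, directly-follows completeness guarantees that redo-branch events are actually observed), so no single~$f$ dominates level~$1$ and the guard is not met; control then falls through to the unchanged operator split, where the original Lemma~11 applies.

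Second, for the subdivision step I would verify that $\symEventLog \opDFComplete \treeOpCompOr_f(\symTree)$ implies $\seqproject{\symEventLog}{1} \opDFComplete \symTree$. The assignment $f.t \mapsto t$ is a bijection between $\treeLan(\treeOpCompOr_f(\symTree))$ and $\treeLan(\symTree)$ that commutes with $\seqproject{\cdot}{1}$ on the log, sending a consecutive event pair $(f.x, f.y)$ to $(x,y)$ and a start (resp.\ end) event $f.x$ to $x$. Checking the four completeness conditions then reduces to transporting each datum across this correspondence: every directly-follows pair, start symbol, and end symbol of $\treeLan(\symTree)$ lifts to the matching datum of $\treeLan(\treeOpCompOr_f(\symTree))$, is witnessed in $\symEventLog$ by completeness, and projects back down into $\seqproject{\symEventLog}{1}$; the alphabet condition follows since projecting out the uniform level-$1$ symbol~$f$ shifts exactly the activities of $\symTree$ from depth~$\geq 2$ to depth~$\geq 1$. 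Thus the recursive call $\funn{Naive}(\seqproject{\symEventLog}{1})$ rediscovers $\symTree$ up to language equivalence, and wrapping it back in $\treeOpCompOr_f$ yields a model language-equivalent to $\treeOpCompOr_f(\symTree)$.

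The main obstacle I expect is the converse half of the detection step: ruling out that the named subtree guard accidentally fires for a standard-operator root. This is where the argument genuinely leans on both the no-duplicate-activities criterion and on directly-follows completeness, most delicately for~$\treeOpLoop$, where one must be certain the redo branch---carrying a level-$1$ activity other than~$f$---is actually exercised by the complete log. Everything else is bookkeeping that transports the original lemmas across the length-one projection~$\seqproject{\cdot}{1}$.
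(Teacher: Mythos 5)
Your proposal is correct and takes essentially the same route as the paper's proof: it reduces the theorem to showing that the named subtree operator maintains Lemma~11 (root detection) and Lemma~13 (correct log subdivision) of the original framework, detects the root $\treeOpCompOr_f$ via the guard on Algorithm~\ref{alg:naive-discovery}, line~\ref{alg:naive-discovery:subtree}, and transports the hypotheses through the projection $\seqproject{\symEventLog}{1} \subseteq \treeLan(\symTree_1)$ before invoking the induction of~\cite[Theorem~14, Page~16]{sleemans2013-im}. If anything, you are more thorough than the paper, which states only the forward half of the detection step (omitting your argument that the guard cannot fire spuriously under a standard-operator root) and dismisses the directly-follows-completeness transport as ``analogous'' where you spell it out via the correspondence $f.t \mapsto t$.
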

\begin{proof}
We only have to show that the introduction of the named subtree operator 
maintains language rediscoverability.

First, we show for the named subtree operator that the  root process tree operator is rediscovered (Lemma~11).
Assume a process tree~$\symTree = \treeOpCompOr_f(\symTree_1)$, for any $f \in \actAlph$,
and let $\symEventLog$ be a log such that $\symEventLog \opDFComplete \symTree$.
Since we know that $\symEventLog \opDFComplete \symTree$, we know that
$\forall x \in t \in \symEventLog : x[1] = f$, and there must be a lower level in the tree.
By simple code inspection on Algorithm~\ref{alg:naive-discovery}, line~\ref{alg:naive-discovery:subtree},
we can see that the Na\"ive Discovery will yield $\treeOpCompOr_f$.

Next, we show for the named subtree operator that the log is correctly subdivided (Lemma~13).
That is, lets assume:
1) a model $\symTree = \treeOpCompOr_f(\symTree_1)$ adhering to the model restrictions; and 
2) $\symEventLog \subseteq \treeLan(\symTree) \land \symEventLog \opDFComplete \symTree$.
Then we have to show that any sublog~$\symEventLog_i$ we recurse upon has:
$\symEventLog_i \subseteq \treeLan(\symTree_i) \land \symEventLog_i \opDFComplete \symTree_i$.
For the named subtree operator, we have exactly one sublog we recurse upon:
$\symEventLog_1 = \seqproject{\symEventLog}{1}$.
We can easily prove this using the sequence projection on the inducation hypothesis:
$\seqproject{\symEventLog}{1} \subseteq \seqproject{\treeLan(\symTree)}{1}$, after substitution:
$\seqproject{\symEventLog}{1} \subseteq \seqproject{\treeLan(\treeOpCompOr_f(\symTree_1))}{1}$.
By definition of the semantics for~$\treeOpCompOr_f$, we can rewrite this to:
$\seqproject{\symEventLog}{1} \subseteq \treeLan(\symTree_1)$.
The proof construction for  $\symEventLog_i \opDFComplete \symTree_i$ is analogous.
Hence, for the named subtree operator that the log is correctly subdivided.

We can conclude that the Na\"ive Discovery adaptation preserves language rediscoverability guarantee of~\cite[Theorem~14, Page~16]{sleemans2013-im}.
\end{proof}

\begin{theorem}
Recursion Aware Discovery preserves language rediscoverability.
\end{theorem}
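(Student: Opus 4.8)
The plan is to mirror the Na\"ive Discovery proof and reuse the three-lemma decomposition underlying \cite[Theorem~14, Page~16]{sleemans2013-im}. Since every standard operator ($\treeOpSeq$, $\treeOpXor$, $\treeOpLoop$, $\treeOpPar$) is handled exactly as in the original framework, I would reduce the claim to the two new cases already flagged above: show that the recursion base case $\treeOpRecurOr_f$ preserves \cite[Lemma~12, Page~16]{sleemans2013-im}, and that the named subtree operator $\treeOpCompOr_f$ preserves \cite[Lemma~11, Page~15]{sleemans2013-im} and \cite[Lemma~13, Page~16]{sleemans2013-im}. The framework's induction on the log then lifts these local guarantees to $\treeLan(\symTree') = \treeLan(\symTree)$ for any system model $\symTree$ in the rediscoverable class and any log with $\symEventLog \opDFComplete \symTree$.

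For the recursion base case (Lemma~12), suppose the system model contains a leaf $\treeOpRecurOr_f$. By the fourth rediscoverability restriction (well-defined recursions) there is a corresponding ancestor $\treeOpCompOr_f$, so at the moment $RADrun$ reaches this position the context path satisfies $f \in \symContextPath$. By the semantics of $\treeOpCompOr^{l}_{f}$ every resolved trace originating at $\treeOpRecurOr_f$ begins with the symbol $f$, so directly-follows completeness forces $x[1] = f$ for every event $x$ of the local sublog. Hence the guard on Algorithm~\ref{alg:recurse-discovery-run}, line~\ref{alg:recurse-discovery-run:check-recursion} fires and $RADrun$ returns $\treeOpRecurOr_f$, rediscovering the base case. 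I would also record that the bookkeeping on line~\ref{alg:recurse-discovery-run:recursion} injects $\seqproject{\symEventLog}{1}$ into $\symEventLog(\symContextPath')$, which is precisely the material consumed by the named-subtree argument below.

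For the named subtree operator I would treat the two lemmas in turn. For Lemma~11, at a genuine $\treeOpCompOr_f$ node this is the introduction of $f$, and the rediscoverability restrictions (no duplicate activities along the explored path) give $f \notin \symContextPath$, so the recursion guard is skipped; since $\symEventLog \opDFComplete \treeOpCompOr_f(\symTree_1)$ yields $\forall x : x[1] = f$ together with a strictly lower level in the hierarchy, the guard on line~\ref{alg:recurse-discovery-run:subtree} fires and $\treeOpCompOr_f$ is recovered. Lemma~13 is the substantive step: the child sublog $\symEventLog(\symContextPath \seqcat \seq{f})$ is \emph{not} a single projection but is accumulated by delayed discovery from (a)~the initial $\seqproject{\symEventLog}{1}$ of line~\ref{alg:recurse-discovery-run:subtree}, and (b)~each recursion contribution injected on line~\ref{alg:recurse-discovery-run:recursion}. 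I would show that once the fixpoint loop on Algorithm~\ref{alg:recurse-discovery}, line~\ref{alg:recurse-discovery:iterate} stabilises, this union is directly-follows complete for the body $\symTree_1$ (with its $\treeOpRecurOr_f$ leaves), so the recursive discovery rediscovers $\symTree_1$ and the gluing on line~\ref{alg:recurse-discovery:glue} reattaches it under the correct $\treeOpCompOr_f$.

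The main obstacle is establishing exactly this completeness of the accumulated sublog. Concretely, I would prove that every directly-follows pair, every start symbol, and every end symbol of $\treeLan(\symTree_1)$ --- including those adjacent to a $\treeOpRecurOr_f$ marker --- is witnessed in $\symEventLog(\symContextPath \seqcat \seq{f})$ after convergence. The delicate part is the behaviour occurring \emph{immediately after} a recursive call: such behaviour never surfaces under a single $f$-prefix in the raw log, and is re-exposed at the right level only through the line~\ref{alg:recurse-discovery-run:recursion} injection of $\seqproject{\symEventLog}{1}$. I would handle this by induction on the nesting depth of recursive unfoldings, using that $\symEventLog \opDFComplete \symTree$ supplies unfoldings at every depth and that $\seqproject{\,\cdot\,}{1}$ interacts cleanly with the $\treeOpCompOr^{l}_{f}$ resolution, so that the fixpoint sublog coincides with a directly-follows complete log for $\symTree_1$. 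A secondary point to get right is the guard ordering in $RADrun$, where the distinction between a fresh naming ($f \notin \symContextPath$) and a recursion ($f \in \symContextPath$) is decided entirely by the context path; I would verify that the rediscoverability restrictions guarantee each name is introduced at most once along any path, so this decision always matches the system model. Combining the accumulated-sublog completeness with the framework induction then yields $\treeLan(\symTree') = \treeLan(\symTree)$, completing the argument.
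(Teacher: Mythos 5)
Your proposal is correct and takes essentially the same route as the paper's own proof: the identical three-lemma decomposition, the identical recursion-base-case argument (restriction~4 forces $f \in \symContextPath$, directly-follows completeness forces $\forall x : x[1] = f$, so the guard on Algorithm~\ref{alg:recurse-discovery-run}, line~\ref{alg:recurse-discovery-run:check-recursion} fires and $\treeOpRecurOr_f$ is returned), and the identical reduction of the named-subtree case to Lemmas~11 and~13. If anything, your plan is more careful than the paper, which disposes of the named-subtree case in a single sentence --- ``analogous to Theorem~\ref{thm:lang-redisc:naive}, using $\seqproject{\symEventLog}{1} \subseteq \symEventLog(\symContextPath')$'' (a superset of a directly-follows complete log stays complete) --- whereas you explicitly flag the two points that sentence leaves implicit: that the recursion-injected traces accumulated at the fixpoint must also fit and complete the body's language, and that the context path correctly separates fresh name introduction from recursion.
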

\begin{proof}
The proof for the introduction of the named subtree operator is analogous to
the proof for Theorem~\ref{thm:lang-redisc:naive},
using the fact that always $\seqproject{\symEventLog}{1} \subseteq \symEventLog(\symContextPath')$
for the corresponding context path~$\symContextPath'$.

We only have to show that the introduction of the recursion operator 
maintains language rediscoverability (Lemma~12).
That is, assume:
1) a model $\symTree = \treeOpRecurOr_f$ adhering to the model restrictions; and 
2) $\symEventLog \subseteq \treeLan(\symTree) \land \symEventLog \opDFComplete \symTree$.
Then we have to show that we discover the model~$\symTree'$ such that $\symTree' = \symTree$.

Since we adhere to the model restrictions, due to restriction~4, we know there 
must be a larger model~$\symTree''$ such that the recursion node~$\treeOpRecurOr_f$ is a leaf of~$\symTree''$
and there exists a corresponding named subtree node~$\treeOpCompOr_f$ on the path from~$\symTree''$ to~$\treeOpRecurOr_f$.
Thus, we can conclude that $\symEventLog$ must be the sublog associated with 
a context path~$\symContextPath$ such that~$f \in \symContextPath$.
By code inspection on Algorithm~\ref{alg:recurse-discovery-run}, line~\ref{alg:recurse-discovery-run:check-recursion},
we see that we only have to prove that~$\forall x \in \symTrace \in \symEventLog : x[1] = f$.
This follows directly from~$\symEventLog \opDFComplete \symTree$.
Hence, the recursion operator is correctly rediscovered.

We can conclude that the Recursion Aware Discovery adaptation preserves language rediscoverability guarantee of~\cite[Theorem~14, Page~16]{sleemans2013-im}.
\end{proof}

\subsubsection{Runtime Complexity}

In~\cite[Run Time Complexity, Page~17]{sleemans2013-im}, the authors describe how
the basic discovery framework is implemented as a polynomial algorithm.
For the selection and log splitting for the normal process tree operators
(Alg.~\ref{alg:naive-discovery}, line~\ref{alg:naive-discovery:opsplit}, and
Alg.~\ref{alg:recurse-discovery-run}, line~\ref{alg:recurse-discovery-run:opsplit}),
existing polynomial algorithms were used.
Furthermore, for the original framework, the number of recursions made
is bounded by the number of activities: $O(\len{\Sigma(L)})$.
We will show that this polynomial runtime complexity is maintained for our adaptations.

In our Na\"ive Discovery adaptation, the number of recursions is determined
by Algorithm~\ref{alg:naive-discovery}, lines~\ref{alg:naive-discovery:subtree} and~\ref{alg:naive-discovery:opsplit}.
For line~\ref{alg:naive-discovery:subtree}, the number of recursions is bounded
by the depth of the hierarchical event log: $O(\|\symEventLog\|)$.
For line~\ref{alg:naive-discovery:opsplit}, the original number of activities bound holds: $O(\len{\Sigma(L)})$.
Thus, the total number of recursions for our Na\"ive Discovery is bounded by $O(\|\symEventLog\| + \len{\Sigma(L)})$.
Hence, the Na\"ive Discovery adaptation has a polynomial runtime complexity.

In one run of our Recursion Aware Discovery, the number of recursions is determined
by Algorithm~\ref{alg:recurse-discovery-run}, line~\ref{alg:recurse-discovery-run:opsplit}.
Note that the recursion and named subtree cases do not recurse directly due to the
delayed discovery principle.
For line~\ref{alg:recurse-discovery-run:opsplit}, the original number of activities bound holds: $O(\len{\Sigma(L)})$.
Thus, we can conclude that Algorithm~\ref{alg:recurse-discovery-run} has a polynomial runtime complexity.

For the complete Recursion Aware Discovery, the runtime complexity is determined
by Algorithm~\ref{alg:recurse-discovery}, lines~\ref{alg:recurse-discovery:iterate} and~\ref{alg:recurse-discovery:glue}.
Each iteration of the loop at line~\ref{alg:recurse-discovery:iterate} is polynomial.
The number of iterations is determined by the number of times an~$\symEventLog(\symContextPath)$ is changed.
Based on Algorithm~\ref{alg:recurse-discovery-run}, lines~\ref{alg:recurse-discovery-run:recursion} and~\ref{alg:recurse-discovery-run:subtree},
the number of times an~$\symEventLog(\symContextPath)$ is changed is bounded 
by the depth of the hierarchical event log: $O(\|\symEventLog\|)$.
Thus, the total number of iterations is polynomial and bounded by $O(\|\symEventLog\|)$.
Each iteration of the loop at line~\ref{alg:recurse-discovery:glue} is polinomial in the named tree depth, and thus bounded by $O(\|\symEventLog\|)$.
The number of iterations is determined by the number of named subtrees, and thus also bounded by $O(\|\symEventLog\|)$.
Hence, the Recursion Aware Discovery adaptation has a polynomial runtime complexity.

\section{Using and Visualizing the Discovered Model}
\label{sec:postprocess}

Discovering a behavioral backbone model is only step one.
Equally important is how one is going to use the model, both for analysis and for further model driven engineering.
In this section, we touch upon some of the solutions we implemented, and demo in Subsection~\ref{sec:demo} and Figure~\ref{fig:demo}.

\subsection{Rewriting, Filtering and the 80/20 Model}
\label{sec:infreqmodel}

To help the user understand the logged behavior, we provide several ways of filtering the model, reducing the visible complexity, and adjusting the model quality.

Based on frequency information, we allow the user to inspect an 80/20 model.
An 80/20 model describes the mainstream (80\%) behavior using a simple (20\%) model~\cite{sleemans-thesis}.
We allow the user to interactively select the cutoff (80\% by default) using sliders directly next to the model visualization, thus enabling the ``real-time exploration'' of behavior.
Unusual behavior can be projected and highlighted onto the 80\% model using existing conformance and deviation detection techniques~\cite{adriansyah2014aligning}.
This way, it is immediately clear where the unusual behavior is present in the model, and how it is different from the mainstream behavior.

Based on hierarchical information, we allow both coarse and fine grained filtering.
Using sliders, the user can quickly select a minimum and maximum hierarchical depth to inspect, and hide other parts of the model.
The idea of depth filtering is illustrated in Figure~\ref{fig:filter:depth}.
Afterwards, users can interactively fold and unfold parts of the hierarchy.
By searching, users can quickly locate areas of interest.
Using term-based tree rewriting (see Table~\ref{tab:reduct-pt}), we present the user with a simplified model that preserves behavior.

\begin{table}[!htb]
  \footnotesize
  \centering
  \caption{Reduction rules for (hierarchical) process trees}
    \vspace*{-0.2cm}
   \begin{tabular}{@{\hskip0pt}>{\raggedleft\arraybackslash}m{28mm}@{\hskip3pt}
   >{\arraybackslash}m{28mm}@{\hskip0pt}
   >{\arraybackslash}m{28mm}@{\hskip0pt}} 
        $\treeOp(\symTree_1)$ 
      & $= \symTree_1$
      & \textbf{ for } $\treeOp \in \set{\treeOpSeq, \treeOpXor, \treeOpPar}$
      \\
        $\treeOp(\ldots_1, \treeOp(\ldots_2), \ldots_3)$ 
      & $= \treeOp(\ldots_1, \ldots_2, \ldots_3)$
      & \textbf{ for } $\treeOp \in \set{\treeOpSeq, \treeOpPar}$
      \\
        $\treeOp(\ldots_1, \actSilent, \ldots_2)$ 
      & $= \treeOp(\ldots_1, \ldots_2)$
      & \textbf{ for } $\treeOp \in \set{\treeOpSeq, \treeOpPar}$
      \\
        $\treeOpXor(\ldots_1, \actSilent, \ldots_2)$ 
      & $= \treeOpXor(\ldots_1, \ldots_2)$
      & \textbf{ if } $\symEmptyTrace \in \treeLan(\ldots_1 \cup \ldots_2)$
      \\
   \end{tabular}
    \vspace*{-0.3cm}
  \label{tab:reduct-pt}%
\end{table}%

\begin{figure}[!htb]%
    \centering%
    \vspace*{-0.5em}
    \noindent\adjustbox{max width=0.45\textwidth}{\begin{tikzpicture}[
    sibling distance = 4em,
    level distance = 2em,
    every node/.style = {draw=none, fill=none, align=center}
]
  \node (pre) {$\treeOpSeq$}
    child { node (pre_a) {$a$} }
    child { node (pre_x) {$\treeOpCompOr_x$}
      child {node (pre_seq2) {$\treeOpSeq$}
        child { node {$b$} }
        child { node (pre_y) {$\treeOpCompOr_y$}
          child { node {$c$} }
        }
      }
    }
    ;
    
  \node[right of=pre, node distance=4cm] (post) {$\treeOpSeq$}
    child { node (post_a) {$\actSilent$} }
      child {node (post_seq2) {$\treeOpSeq$}
        child { node {$b$} }
        child { node (post_y) {$y$}
        }
      }
    ;
    
  \node [above of=pre, node distance=0.5cm, text height=0.3cm] 
    {\footnotesize\textbf{Discovered Model}};
  \node [above of=post, node distance=0.5cm, text height=0.3cm] 
    {\footnotesize\textbf{Model after Depth Filtering}};
    
  \draw[dashed,->,gray]  (pre_a.east) to[bend left=20] (post_a.west);
  \draw[dashed,->,gray]  (pre_x.east) to[bend left=20] (post_seq2.west);
  \draw[dashed,->,gray]  (pre_y.east) to[bend right=20] (post_y.west);
  
  \node[anchor=east,gray] at (-1.1,-1.07) {\footnotesize\emph{min depth}};
  \node[anchor=east,gray] at (-1.1,-2.5) {\footnotesize\emph{max depth}};
  \draw[dashed,gray] (-1.1,-1.07) -- (1.8,-1.07);
  \draw[dashed,gray] (-1.1,-2.5) -- (1.8,-2.5); 
  
\end{tikzpicture}}
    \vspace*{-0.5em}
    \caption{Illustration of depth filtering, where we hide everything above $x$ and below $y$.
    The dashed arrows relate the altered nodes.}%
    \label{fig:filter:depth}%
    \vspace*{-0.5em}
\end{figure}
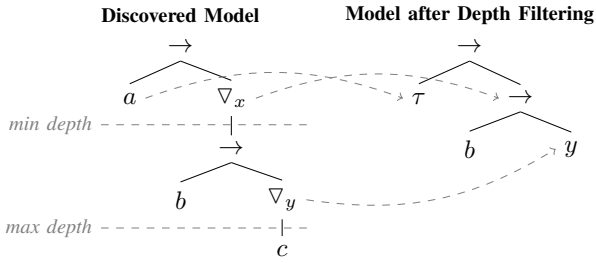%

\begin{figure}[!htb]%
    \centering%
    \subfigure[Hierarchical process tree]{\adjustbox{max width=0.23\textwidth}{\begin{tikzpicture}[
    sibling distance = 4em,
    level distance = 2.5em,
    every node/.style = {draw=none, fill=none, align=center}
]
  \node{$\treeOpCompOr_{\text{\cf{Main.main()}}}$}
    child { node{$\treeOpSeq$}
      child { node{\cf{Main.input()}} }
      child { node{$\treeOpXor$}
        child {node{\cf{A.f()}} }
        child {node{\cf{B.f()}} }
      }
    }
    ;
  \node at (0.0, -3.1) {};
\end{tikzpicture}}}\hspace*{0.5cm}
    \subfigure[Statechart]{\adjustbox{max width=0.23\textwidth}{
\begin{tikzpicture}[
    round/.style={rounded corners=1.5mm,minimum width=1cm,inner sep=2mm,above
right,draw,align=center},
    edge/.style={-latex, semithick}
]
  \node[circle, draw, scale=1.5] (start) at (1.9,0.45) {};
  \node[circle, draw, scale=1.5, double] (end) at (1.9,-3.05) {};
  
  \node[round, text width=18mm] (input) at (0.8,-0.95) {\cf{Main.input()}};
  
  \node[circle, fill=black, scale=0.5] (xStart) at (1.9,-1.3) {};
  \node[circle, fill=black, scale=0.5] (xEnd) at (1.9,-2.3) {};
  
  \node[round] (A) at (0.4,-2.1) {\cf{A.f()}};
  \node[round] (B) at (2.2,-2.1) {\cf{B.f()}};
  
  \node[inner sep=0mm] (lblMain) at (0.8,-0.15) {\cf{Main.main()}};
  \node[round,fit=(input)(A)(B)(lblMain)(xStart)(xEnd),inner sep=1.4mm] (Main) {};
  
  \draw[edge] (start) to (input);
  \draw[edge] (input) to (xStart);
  \draw[edge] (xStart) to (A);
  \draw[edge] (xStart) to (B);
  \draw[edge] (A) to (xEnd);
  \draw[edge] (B) to (xEnd);
  \draw[edge] (xEnd) to (end);
  
  
  \scope[shift={(1.72,0.64)},yscale=0.15mm, xscale=0.15mm]
  \path[y=0.80pt, x=0.80pt, yscale=-1.000000, xscale=1.000000, inner sep=0pt,
  outer sep=0pt, fill=black] (8.0000,4.0000) -- (8.0000,28.0000) --
  (26.0000,16.0000) -- cycle;
  \endscope
  
  \scope[shift={(1.79,-2.94)},yscale=0.16mm, xscale=0.16mm]
  \scope[y=0.80pt, x=0.80pt, yscale=-1.000000, xscale=1.000000, inner sep=0pt,
    outer sep=0pt]
  \scope[shift={(-96.0,-480.0)},fill=black]
  \path[fill] (96.0000,480.0000) -- (96.0000,496.0000) --
    (112.0000,496.0000) -- (112.0000,480.0000) -- (96.0000,480.0000) --
    cycle(96.0000,480.0000);
  \endscope
  \endscope
  \endscope
  
\end{tikzpicture}}}
    \subfigure[Sequence diagram]{\adjustbox{max width=0.23\textwidth}{\begin{tikzpicture}[
    environment/.style={circle,fill,draw, scale=0.2mm},
    lifeline/.style={rectangle,draw,align=center},
    lifeend/.style={},
    linebase/.style={semithick,dash pattern={on 3pt off 3pt}},
    activation/.style={rectangle,draw,align=center, fill=white},
    fragment/.style={rectangle,draw,align=center},
    fragmentLabel/.style={align=center,inner sep=0mm},
    fragmentSep/.style={semithick,dash pattern={on 6pt off 3pt}},
    fragmentLabelBox/.style={rectangle,draw,align=center},
    message/.style={-latex, semithick},
    messageRtn/.style={-latex, semithick,dash pattern={on 3pt off 3pt}},
    msgLabel/.style={above=0.5mm, midway,fill=white,inner sep=0.5mm}
]
  \node[environment] (env) at (-1.4, -0.6) {};
  \node[environment] (env_end) at (-1.4, -4.3) {};
  
  \node[lifeline] (main_s) at (0, 0.1) {\cf{Main}};
  \node[lifeend] (main_e) at (0, -4.7) {};
  \draw[linebase] (main_s) to (main_e);
    
  \node[lifeline] (a_s) at (1.0, 0.1) {\cf{A}};
  \node[lifeend] (a_e) at (1.0, -4.7) {};
  \draw[linebase] (a_s) to (a_e);
  
  \node[lifeline] (b_s) at (2.0, 0.1) {\cf{B}};
  \node[lifeend] (b_e) at (2.0, -4.7) {};
  \draw[linebase] (b_s) to (b_e);

  \node[activation, minimum height=37mm] (actMain) at (0.0, -2.45) {};
  \node[activation, minimum height=5mm] (actA) at (1.0, -2.15) {};
  \node[activation, minimum height=5mm] (actB) at (2.0, -3.45) {};
  
  \node[fragmentLabel] (lblAlt) at (-0.7,-1.55) {\scriptsize{alt}};
  \node[fragmentLabelBox,fit=(lblAlt)] (lblBoxAlt) {};
  \node[fragment,fit=(lblAlt)(actA)(actB)] (AltFragment) {};
  \draw[fragmentSep] (-0.9, -2.6) to (2.2, -2.6);

  \draw[message] (env) -- (-0.1, -0.6) node[msgLabel] {\cf{main()}};
  \draw[message] (-0.1, -4.3) -- (env_end) node[msgLabel] {\cf{main()}};
  
  \draw[message] (0.1, -0.8) arc(90:-90:0.4 and 0.15);
  \node[fill=white,inner sep=0.5mm] at (1.1, -0.93) {\cf{input()}};
  
  \draw[message] (0.1,-1.9) -- (0.9, -1.9) node[msgLabel] {\cf{f()}};
  \draw[messageRtn] (0.9, -2.4) -- (0.1,-2.4) node[msgLabel] {\cf{f()}};
  
  \draw[message] (0.1,-3.2) -- (1.9, -3.2) node[msgLabel] {\cf{f()}};
  \draw[messageRtn] (1.9, -3.7) -- (0.1,-3.7) node[msgLabel] {\cf{f()}};
\end{tikzpicture}}}\hspace*{0.5cm}
    \subfigure[Petri net]{\adjustbox{max width=0.23\textwidth}{\begin{tikzpicture}[
    place/.style={circle,draw},
    trans/.style={rectangle,draw,align=center},
    ttau/.style={rectangle,draw,fill,align=center,text width=0.4mm, 
    text height=2mm, ,minimum width=0.4mm},
    edge/.style={-latex, semithick}
]
  \node[place,fill=black,scale=0.4] (token) at (0, 0) {};
  \node[place] (ps) at (0, 0) {};
  \node[trans] (main_s) at (0, -0.7) {\cf{Main.main()+start}};
  \node[place] (p1) at (2.0, -1.2) {};
  
  \node[trans] (in_s) at (0, -1.3) {\cf{Main.input()+start}};
  \node[place] (pi) at (2.0, -1.9) {};
  \node[trans] (in_c) at (0, -1.9) {\cf{Main.input()+end}};
  \node[place] (p2) at (0, -2.55) {};
   
  \node[trans] (af_s) at (-1.4, -2.55) {\cf{A.f()+start}};
  \node[place] (pa2) at (-1.4, -3.2) {};
  \node[trans] (af_c) at (-1.4, -3.85) {\cf{A.f()+end}};
   
  \node[trans] (bf_s) at (1.4, -2.55) {\cf{B.f()+start}};
  \node[place] (pb2) at (1.4, -3.2) {};
  \node[trans] (bf_c) at (1.4, -3.85) {\cf{B.f()+end}};
   
  \node[place] (p3) at (0, -3.85) {};
  
  \node[trans] (main_c) at (0, -4.5) {\cf{Main.main()+end}};
  \node[place] (pc) at (0, -5.2) {};
  
  \draw[edge] (ps) to (main_s);
  \draw[edge] (main_s) to (2.0, -0.7) to (p1);
  
  \draw[edge] (p1) to (1.45, -1.2);
  \draw[edge] (1.45, -1.45) to (2.0, -1.45) to (pi);
  \draw[edge] (pi) to (1.3, -1.9);
  \draw[edge] (in_c) to (p2);
  
   \draw[edge] (p2) to (af_s);
   \draw[edge] (af_s) to (pa2);
   \draw[edge] (pa2) to (af_c);
   \draw[edge] (af_c) to (p3);
   
   \draw[edge] (p2) to (bf_s);
   \draw[edge] (bf_s) to (pb2);
   \draw[edge] (pb2) to (bf_c);
   \draw[edge] (bf_c) to (p3);
  
  \draw[edge] (p3) to (main_c);
  \draw[edge] (main_c) to (pc);
\end{tikzpicture}}}
    \vspace*{-0.3em}
    \caption{A hierarchical process tree, and its mapping to different formalisms.}%
    \label{fig:map2}%
    \vspace*{-0.8em}
\end{figure}
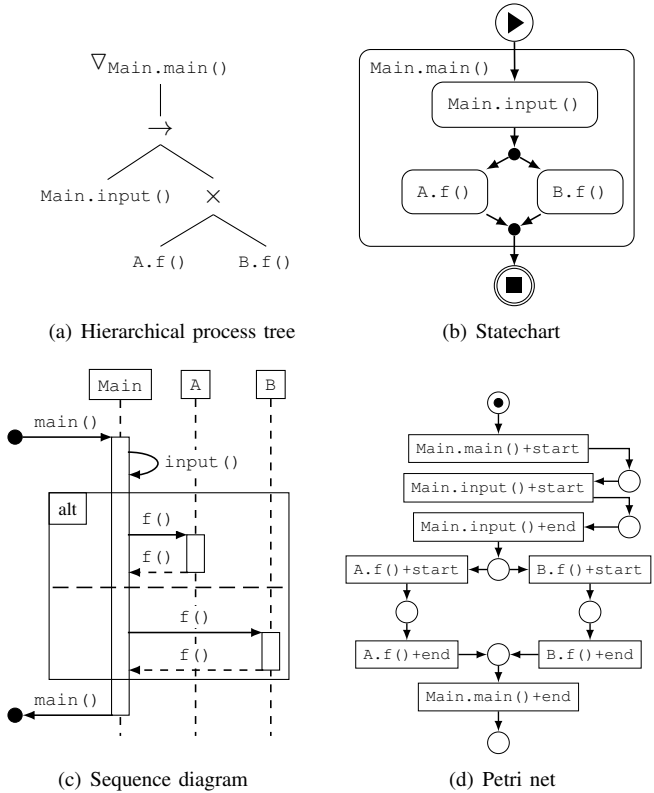%

\subsection{Linking the Model to Event Data and the Source Code}
For further analysis, like performance (timing), frequency (usage) and conformance, we annotate the discovered model with additional information.
By annotating this information onto (parts of) the model, we can visualize it in the context of the behavior.
This annotation is based on the event log data provided as input and is partly provided by existing algorithms in the Process Mining Toolkit ProM.
Most notably, we 1) align the discovered model with the event log, as described in~\cite{adriansyah2014aligning}, and 2) link model elements back to the referenced source code lines in Eclipse that generated the logged events.

\subsection{Mapping and Visualizing}
For visualization and integration with existing techniques, we implemented mappings to other formalisms.
As noted in~\cite{sleemans-thesis}, process trees represent a ``block-structured'' language.
Hence, we can simply map each operator to a block-structured concept in the target formalism, and preserve behavior by construction.
We support mappings to formalisms such as Statecharts, (Data) Petri nets, Sequence diagrams, and BPMN diagrams.
Some example mappings are given in Figure~\ref{fig:map2}.

\section{Evaluation}
\label{sec:evaluation}

In this section, we compare our technique against related, implemented techniques.
The proposed algorithms are implemented in the \emph{Statechart} plugin for the process mining framework ProM \cite{prom:statechart}.
In the remainder of this section, we will refer to Algorithm~\ref{alg:naive-discovery} as \emph{Na\"ive}, and to Algorithm~\ref{alg:recurse-discovery} as \emph{RAD} (short for Recursion Aware Discovery).
We end the evaluation by showing example results in our tool.

\subsection{Input and Methodology for Comparative Evaluation}
In the comparative evaluation, we focus on the quantitative aspects of the design criteria from the introduction.
That is, the approach and tools should provide a behavioral backbone model that is precise and fits the actual system.
We measure two aspects for a number of techniques and input event logs:
1) the \emph{running time} of the technique, and
2) the \emph{model quality}.

For the running time, we measured the average running time and associated 95\% confidence interval over 30 micro-benchmark executions, after 10 warmup rounds for the Java JVM.
Each technique is allowed at most 30 seconds for completing a single model discovery.
For the model quality, we use fitness and precision as described in~\cite{adriansyah2014aligning}, and set a time limit of at most 5 minutes.
In short, \emph{fitness} expresses the part of the log that is represented by the model; 
\emph{precision} expresses the behavior in the model that is present in the log.
For these experiments we used a laptop with an i7-4700MQ CPU @~2.40~GHz, Windows~8.1 and Java SE~1.7.0~67 (64~bit) with 12~GB of allocated RAM.

We selected five event logs as experiment input, covering a range of input problem sizes.
The input problem size is typically measured in terms of four metrics:
number of traces, number of events, number of activities (size of the alphabet), and average trace length.
The event logs and their sizes are shown in Table~\ref{tab:eval2:logs}.
The \emph{BPIC 2012}~\cite{xeslog:bpic2012} and \emph{BPIC 2013}~\cite{xeslog:bpic2013} event logs are so called BPI Challenge logs.
These large real-life event logs with complex behavior are often used in process mining evaluations.
The challenge logs are made available yearly in conjunction with the BPM conference and are considered sufficiently large and complex inputs to stress test process mining techniques.
The \emph{JUnit 4.12}~\cite{xeslog:junit412}, Apache Commons \emph{Crypto 1.0.0}~\cite{xeslog:apache-crypto}, and \emph{NASA CEV}~\cite{xeslog:nasa-cev} event logs are created using an extended version of the instrumentation tool developed for~\cite{leemans2015models}, yielding XES event logs with method-call level events.
The JUnit 4.12 software~\cite{sw:junit412} was executed once, using the example input found at~\cite{sw:junit-getting-started}.
For the Apache Commons Crypto 1.0.0 software~\cite{sw:apache-crypto}, we executed the \cf{CbcNoPaddingCipherStreamTest} unit test.
For the NASA CEV software~\cite{sw:nasa-cev}, we executed a unit test generated from the source code, covering all of the code branches.

\begin{table}[htbp]
  \centering
\begin{threeparttable}%
  \caption{The event logs used in the evaluation, with input sizes}
  \begin{tabular}{@{\hskip3pt}c@{\hskip3pt}l |
      D{.}{.}{6.0} 
      D{.}{.}{7.0}  
      D{.}{.}{3.0}  
      D{.}{.}{6.2}
    }
    & Event Log 
    & \multicolumn{1}{@{\hskip4pt} c @{\hskip4pt}}{\# Traces}
    & \multicolumn{1}{@{\hskip4pt} c @{\hskip4pt}}{\# Events}
    & \multicolumn{1}{@{\hskip4pt} c @{\hskip4pt}}{\# Acts}
    & \multicolumn{1}{@{\hskip4pt} c @{\hskip4pt}}{Avg. $|$Trace$|$} \\
   \midrule
   \cite{xeslog:bpic2012} & BPIC 2012
     & 13,087 & 262,200  & 24 & 20.04 \\
   \cite{xeslog:bpic2013} & BPIC 2013
     & 7,554 & 65,533 & 13 & 8.70 \\
   \cite{xeslog:junit412} & JUnit 4.12
     & 1 & 946 & 182 & 946.00 \\
   \cite{xeslog:apache-crypto} & Crypto 1.0.0
     & 3 & 241,973 & 74 & 80,657.67 \\
   \cite{xeslog:nasa-cev} & NASA CEV
     & 2,566 & 73,638 & 47 & 28.70 \\
  \end{tabular}%
  \label{tab:eval2:logs}%
  \end{threeparttable}%
\end{table}%

\definecolor{plotcoloraxis}{rgb}{0.8,0.8,0.8}
\definecolor{plotcolorfill}{rgb}{0.7,0.7,0.7}

\newcommand{\evalT}{-\tnote{T}}
\newcommand{\evalNA}{\tnote{n/a}}
\newcommand{\evalM}{-\tnote{M}}

\newcommand{\plottime}[1]{
\multirow{23}{1.4cm}{\hspace*{-3ex}
\begin{tikzpicture}[trim axis left,trim axis right]
\begin{axis}[
  y=1.052*\baselineskip,
  width=1.5cm,
  scale only axis,
  hide y axis,
  axis line style={opacity=0},
  major tick style={draw=none},
  minor tick style={draw=none},
  xmajorgrids,
  major grid style=white,
  axis on top,
  enlarge y limits={abs=0.4},
  axis x line*=bottom,
  xbar,
  bar width=1.5ex,
  xmin=0,
  x tick label style={
    rotate=45,
    anchor=east
  },
  scaled ticks=false,
  xmode=log,
  ]
\addplot+[
  white,
  fill=plotcolorfill
  ][
  error bars/.cd,
  x dir=both, 
  x explicit,
  error bar style={black}
  ] coordinates {
#1
};
\end{axis}
\end{tikzpicture}%
}
\hspace*{-4ex}
}

\begin{table*}[htbp]
  \centering
\noindent\adjustbox{max width=0.99\textwidth}{%
\begin{threeparttable}%
  \caption{
    Running time for the different algorithms, hierarchy heuristics, paths frequency filter settings, and event logs. 
  }
    \begin{tabular}{
    @{\hskip3pt}c@{\hskip3pt}l@{\hskip4pt}c
    |
    D{.}{.}{1}c 
    D{.}{.}{1}c 
    D{.}{.}{1}c 
    D{.}{.}{1}c 
    D{.}{.}{1}c
    }
          & Algorithm (Heuristic) & Paths 
          & \multicolumn{2}{c}{BPIC 2012} 
          & \multicolumn{2}{c}{BPIC 2013} 
          & \multicolumn{2}{c}{JUnit 4.12} 
          & \multicolumn{2}{c}{Crypto 1.0.0} 
          & \multicolumn{2}{c}{NASA CEV} \\
          \midrule
    \cite{aalst2004workflow-alpha} & Alpha miner &       
    & 150.1 & \plottime{(150.0705,22.5)+-(5.2598,-5.2598) (840.169,21.5)+-(4.4131,-4.4131) (2858.5425,20.5)+-(16.2253,-16.2253) (0,19.5)+-(0,0) (7234.3495,18.5)+-(532.4335,-532.4335) (0,17.5)+-(0,0) (0,16.5)+-(0,0) (0,15.5)+-(0,0) (0,14.5)+-(0,0) (0,13.5)+-(0,0) (0,12.5)+-(0,0) (0,11.5)+-(0,0) (3239.9375,10.5)+-(29.6637,-29.6637) (5111.6387,9.5)+-(54.8337,-54.8337) (3588.057,8)+-(28.9731,-28.9731) (2865.5773,7)+-(41.1475,-41.1475) (0,6)+-(0,0) (0,5)+-(0,0) (0,4)+-(0,0) (0,3)+-(0,0) (2058.5896,2)+-(92.5553,-92.5553) (2394.9674,1)+-(259.932,-259.932) } 
    & 73.5  & \plottime{(73.5357,22.5)+-(1.064,-1.064) (278.0308,21.5)+-(10.3514,-10.3514) (827.4219,20.5)+-(4.3833,-4.3833) (6023.7899,19.5)+-(101.1957,-101.1957) (4354.7201,18.5)+-(42.8913,-42.8913) (0,17.5)+-(0,0) (0,16.5)+-(0,0) (0,15.5)+-(0,0) (0,14.5)+-(0,0) (0,13.5)+-(0,0) (0,12.5)+-(0,0) (0,11.5)+-(0,0) (1351.2428,10.5)+-(12.7801,-12.7801) (947.4318,9.5)+-(7.7122,-7.7122) (1117.8147,8)+-(9.5315,-9.5315) (603.3966,7)+-(6.4494,-6.4494) (0,6)+-(0,0) (0,5)+-(0,0) (0,4)+-(0,0) (0,3)+-(0,0) (891.0974,2)+-(17.032,-17.032) (1028.645,1)+-(92.3924,-92.3924) } 
    & 9.2   & \plottime{(9.2177,22.5)+-(2.2602,-2.2602) (1349.6518,21.5)+-(4.988,-4.988) (166.7606,20.5)+-(0.7451,-0.7451) (0,19.5)+-(0,0) (0,18.5)+-(0,0) (243.9321,17.5)+-(3.799,-3.799) (581.9946,16.5)+-(3.0398,-3.0398) (751.7655,15.5)+-(1.4915,-1.4915) (108.9328,14.5)+-(1.0461,-1.0461) (371.6499,13.5)+-(0.6436,-0.6436) (512.2808,12.5)+-(0.7281,-0.7281) (0,11.5)+-(0,0) (215.6757,10.5)+-(3.4474,-3.4474) (268.5292,9.5)+-(3.7121,-3.7121) (278.3296,8)+-(4.0439,-4.0439) (298.4684,7)+-(2.8323,-2.8323) (15.0682,6)+-(0.2549,-0.2549) (12.5703,5)+-(0.0979,-0.0979) (16.4387,4)+-(0.306,-0.306) (14.5567,3)+-(0.1483,-0.1483) (23.2945,2)+-(0.2213,-0.2213) (23.6531,1)+-(0.1413,-0.1413) } 
    & 183.1 & \plottime{(183.1289,22.5)+-(0.5473,-0.5473) (0,21.5)+-(0,0) (0,20.5)+-(0,0) (0,19.5)+-(0,0) (0,18.5)+-(0,0) (0,17.5)+-(0,0) (0,16.5)+-(0,0) (0,15.5)+-(0,0) (0,14.5)+-(0,0) (0,13.5)+-(0,0) (0,12.5)+-(0,0) (0,11.5)+-(0,0) (10866.1318,10.5)+-(379.3743,-379.3743) (5213.9391,9.5)+-(104.1891,-104.1891) (26804.7475,8)+-(51.5115,-51.5115) (0,7)+-(0,0) (1544.6311,6)+-(22.5103,-22.5103) (1545.6299,5)+-(34.7483,-34.7483) (2186.3708,4)+-(698.88,-698.88) (2082.8396,3)+-(533.047,-533.047) (0,2)+-(0,0) (1,1)+-(0,0) } 
    & 37.8  & \plottime{(37.7659,22.5)+-(0.6964,-0.6964) (359.6048,21.5)+-(2.9378,-2.9378) (4148.231,20.5)+-(8.9145,-8.9145) (0,19.5)+-(0,0) (0,18.5)+-(0,0) (13426.0167,17.5)+-(53.7755,-53.7755) (22213.3887,16.5)+-(33.5082,-33.5082) (0,15.5)+-(0,0) (0,14.5)+-(0,0) (0,13.5)+-(0,0) (0,12.5)+-(0,0) (0,11.5)+-(0,0) (911.2937,10.5)+-(14.0079,-14.0079) (912.5723,9.5)+-(17.3952,-17.3952) (959.4593,8)+-(11.5672,-11.5672) (1047.0582,7)+-(13.8196,-13.8196) (355.8747,6)+-(4.3709,-4.3709) (341.4533,5)+-(4.4522,-4.4522) (438.9791,4)+-(95.2008,-95.2008) (373.8024,3)+-(35.6501,-35.6501) (1275.901,2)+-(34.4305,-34.4305) (1452.3305,1)+-(155.3607,-155.3607) } \\
    \cite{Weijters2011-fhm} & Heuristics &       & 840.2 &       & 278.0 &       & 1349.7 &       & \evalT &       & 359.6 &  \\
    \cite{gunther2007fuzzy} & Fuzzy miner &       & 2858.5 &       & 827.4 &       & 166.8 &       & \evalT &       & 4148.2 &  \\
    \cite{vanderWerf2008-ilp} & ILP miner &       & \evalT &       & 6023.8 &       & \evalT &       & \evalT &       & \evalT &  \\
    \cite{zelst2015-ilp-filter} & ILP with filtering &       & 7234.3 &       & 4354.7 &       & \evalT &       & \evalT &       & \evalT &  \\
    \cite{Walkinshaw2016mint} & MINT, redblue, k=1 &       & \evalT &       & \evalT &       & 243.9 &       & \evalT &       & 13426.0 &  \\
    \cite{Walkinshaw2016mint} & MINT, redblue, k=2 &       & \evalT &       & \evalT &       & 582.0 &       & \evalT &       & 22213.4 &  \\
    \cite{Walkinshaw2016mint} & MINT, redblue, k=3 &       & \evalT &       & \evalT &       & 751.8 &       & \evalT &       & \evalT &  \\
    \cite{Walkinshaw2016mint} & MINT, ktails, k=1 &       & \evalT &       & \evalT &       & 108.9 &       & \evalT &       & \evalT &  \\
    \cite{Walkinshaw2016mint} & MINT, ktails, k=2 &       & \evalT &       & \evalT &       & 371.6 &       & \evalT &       & \evalT &  \\
    \cite{Walkinshaw2016mint} & MINT, ktails, k=3 &       & \evalT &       & \evalT &       & 512.3 &       & \evalT &       & \evalT &  \\
    \cite{Beschastnikh2011synoptic} & Synoptic &       & \evalT &       & \evalT &       & \evalT &       & \evalT &       & \evalT &  \\
    \cite{sleemans-thesis} & IM (baseline) & 1.0   & 3239.9 &       & 1351.2 &       & 215.7 &       & 10866.1 &       & 911.3 &  \\
    \cite{sleemans-thesis} & IM (baseline) & 0.8   & 5111.6 &       & 947.4 &       & 268.5 &       & 5213.9 &       & 912.6 &  \\
    \midrule
    \multirow{8}[0]{*}{\rotatebox[origin=c]{90}{\textbf{Our techniques}}}  & Na\"ive (no heuristic) & 1.0   & 3588.1 &       & 1117.8 &       & 278.3 &       & 26804.7 &       & 959.5 &  \\
    \multicolumn{1}{c}{} & Na\"ive (no heuristic) & 0.8   & 2865.6 &       & 603.4 &       & 298.5 &       & \evalT &       & 1047.1 &  \\
    \multicolumn{1}{c}{} & Na\"ive (Nested Calls) & 1.0   & \evalNA &       & \evalNA &       & 15.1  &       & 1544.6 &       & 355.9 &  \\
    \multicolumn{1}{c}{} & Na\"ive (Nested Calls) & 0.8   & \evalNA &       & \evalNA &       & 12.6  &       & 1545.6 &       & 341.5 &  \\
    \multicolumn{1}{c}{} & RAD (Nested Calls) & 1.0   & \evalNA &       & \evalNA &       & 16.4  &       & 2186.4 &       & 439.0 &  \\
    \multicolumn{1}{c}{} & RAD (Nested Calls) & 0.8   & \evalNA &       & \evalNA &       & 14.6  &       & 2082.8 &       & 373.8 &  \\
    \multicolumn{1}{c}{} & Na\"ive (Struct. Names) & 0.8   & 2058.6 &       & 891.1 &       & 23.3  &       & \evalM &       & 1275.9 &  \\
    \multicolumn{1}{c}{} & RAD (Struct. Names) & 0.8   & 2395.0 &       & 1028.6 &       & 23.7  &       & \evalM &       & 1452.3 &  \\
    \multicolumn{13}{r}{} \\[0.7em]
    \multicolumn{3}{r}{} & \multicolumn{10}{c}{Avg. runtime (in milliseconds, with log scale plot), over 30 runs, with 95\% confidence interval}
    \end{tabular}%
    \vspace*{-1.4em}
    \begin{tablenotes}%
    \begin{multicols}{3}
    \item[M] Out of memory exception (12 GB)
    \item[T] Time limit exceeded (30 sec.)
    \item[n/a] Heuristic not applicable
     \end{multicols}
    \end{tablenotes}%
  \label{tab:eval2:results:time}%
  \end{threeparttable}%
}%
    \vspace*{-2.0em}
\end{table*}%

We compare our discovery algorithms against a selection of the techniques mentioned in Section~\ref{sec:related-work}.
Unfortunately, we could not compare against some of the related work due to invalid input assumptions or the lack of a reference implementation.
The \emph{Inductive miner} (\emph{IM})~\cite{sleemans-thesis} is our baseline comparison algorithm, since our approach builds upon the IM framework.
For the Inductive miner and our derived techniques, we also consider the \emph{paths} setting.
This is the frequency cutoff for discovering an 80/20 model (see Subsection~\ref{sec:infreqmodel}): 1.0 means all behavior, 0.8 means 80\% of the behavior.
The work of~\cite{aalst2004workflow-alpha,Weijters2011-fhm,zelst2015-ilp-filter,vanderWerf2008-ilp} provides a comparison with standard process mining techniques.
The \emph{Fuzzy miner}~\cite{gunther2007fuzzy} provides a comparison for hierarchy discovery.
However, it yields models without semantics, and hence the quality cannot be determined.
The \emph{MINT algorithm (EFSM inference)}~\cite{Walkinshaw2016mint,sw:efsmi-mint} provides a comparison with the well-known \emph{redblue} and \emph{ktails} dynamic analysis techniques.
The \emph{Synoptic} algorithm~\cite{Beschastnikh2011synoptic,sw:synoptic} is an invariant based FSM inferrer.

For our techniques, we also consider the following \emph{heuristics for hierarchy}:
1) No heuristic (use log as is), 
2) Nested Calls, and 
3) Structured \cf{package.class.method} Names (Struct.
Names).
Note that the Nested Calls heuristic is only applicable for software event logs, and not for the BPIC logs.

\subsection{Comparative Evaluation Results and Discussion}

\subsubsection{Runtime Analysis}
In Table~\ref{tab:eval2:results:time}, the results for the runtime benchmark are given.
As noted before, the Nested Calls setup is not applicable for the BPIC logs.
We immediately observe that the ILP, MINT, and Synoptic algorithms could not finish in time on most logs.
MINT and Synoptic have difficulty handling a large number of traces.
We also notice that most setups require a long processing time and a lot of memory for the Apache Crypto log.
Large trace lengths, such as in the Crypto log, are problematic for all approaches.
Our techniques overcome this problem by using the hierarchy to divide large traces into multiple smaller traces (see below).

When we take a closer look at the actual running times, we observe the advantages of the heuristics for hierarchy and accompanied discovery algorithms.
In all cases, using the right heuristic before discovery improves the running time.
In extreme cases, like the Apache Crypto log, it even makes the difference between getting a result and having no result at all.
Note that, with a poorly chosen heuristic, we might not get any improvements, e.g., note the absence of models for the Apache Crypto plus Structured Names heuristics.

The speedup factor for our technique depends on an implicit input problem size metric: the depth of the discovered hierarchy.
In Table~\ref{tab:eval2:results:depth}, the discovered depths are given for comparison.
For example, the dramatic decrease in running time for the JUnit log can be explained by the large depth in hierarchy: 25~levels in this case.
This implies that the event log is decomposed into many smaller sublogs, 
as per Alg.~\ref{alg:naive-discovery}, line~\ref{alg:naive-discovery:subtree}, 
and Alg.~\ref{alg:recurse-discovery-run}, lines~\ref{alg:recurse-discovery-run:recursion} and~\ref{alg:recurse-discovery-run:subtree}.
Hence, the imposed hierarchy indirectly yields a good decomposition of the problem, aiding the divide and conquer tactics of the underlying algorithms.

\renewcommand{\evalNA}{\tnote{n/a}\hspace*{0.85em}}

\begin{table}[htp!]
  \centering
  \caption{
    Depth of the discovered hierarchy, 
    for the different event logs, hierarchy heuristics, and algorithms}
    \vspace*{-1em}
\noindent\adjustbox{max width=0.45\textwidth}{%
\begin{threeparttable}%
  \begin{tabular}{@{\hskip3pt}c@{\hskip3pt}l |
      c @{\hskip1pt}
      D{.}{.}{3.0}
      D{.}{.}{3.0}
      c @{\hskip1pt}
      D{.}{.}{3.0}
      D{.}{.}{3.0}
    }
    & \multicolumn{1}{ c }{}
    & 
    & \multicolumn{2}{ c }{Nested Calls}
    & 
    & \multicolumn{2}{ c }{Struct. Names}
    \\
    \cmidrule{4-5}
    \cmidrule{7-8}
    & Event Log 
    & 
    & \multicolumn{1}{ c }{Na\"ive}
    & \multicolumn{1}{ c }{RAD}
    & 
    & \multicolumn{1}{ c }{Na\"ive}
    & \multicolumn{1}{ c }{RAD}
    \\
   \midrule
   \cite{xeslog:bpic2012} & BPIC 2012 &
     & \evalNA & \evalNA & & 2 & 2 \\
   \cite{xeslog:bpic2013} & BPIC 2013 &
     & \evalNA & \evalNA & & 2 & 2 \\
   \cite{xeslog:junit412} & JUnit 4.12 &
     & 25 & 18 & & 9 & 9 \\
   \cite{xeslog:apache-crypto} & Apache Crypto 1.0.0 &
     & 8 & 8 & & \evalNA & \evalNA \\
   \cite{xeslog:nasa-cev} & NASA CEV &
     & 3 & 3 & & 3 & 3 \\
  \end{tabular}%
    \vspace*{-1em}
    \begin{tablenotes}%
    \begin{multicols}{2}
    \item $\,$
    \item[n/a] No model (see Table~\ref{tab:eval2:results:time})
     \end{multicols}
    \end{tablenotes}%
  \label{tab:eval2:results:depth}%
  \end{threeparttable}%
}%
    \vspace*{-2em}
\end{table}%

\definecolor{plotcoloraxis}{rgb}{0.8,0.8,0.8}
\definecolor{plotcolorfill}{rgb}{0.7,0.7,0.7}

\renewcommand{\evalT}{-\tnote{T}}
\newcommand{\evalU}{-\tnote{U}}
\newcommand{\evalN}{\tnote{n/a}}
\newcommand{\evalR}{-\tnote{R}}

\newcommand{\plotquality}[1]{
\multirow{19}{0.4cm}{\hspace*{-3.5ex}
\begin{tikzpicture}[trim axis left,trim axis right]
\begin{axis}[
  y=1.05*\baselineskip,
  width=0.5cm,
  scale only axis,
  hide y axis,
  axis line style={opacity=0},
  major tick style={draw=none},
  enlarge y limits={abs=0.4},
  axis x line*=bottom,
  xbar,
  bar width=1.5ex,
  xmin=0.0,
  xmax=1.0,
  ticks=none,
  scaled ticks=false
  ]
\addplot+[
  white,
  fill=plotcolorfill
  ][] coordinates {
#1
};
\end{axis}
\end{tikzpicture}%
}
\hspace*{-4ex}
}

\begin{table*}[htp!]
  \centering
\noindent\adjustbox{max width=0.99\textwidth}{%
\begin{threeparttable}%
  \caption{
    Model quality scores for the different algorithms, hierarchy heuristics, paths frequency filter settings, and event logs.
    Scores range from 0.0 to 1.0, higher is better. No scores are available for the Fuzzy model and Structured Names hierarchy.
  }
    \begin{tabular}{
    @{\hskip3pt}c@{\hskip3pt}l@{\hskip5pt}c
    |
    @{\hskip3pt}c@{\hskip3pt}
    D{.}{.}{2}c@{\hskip3pt}D{.}{.}{2}c @{\hskip3pt}c@{\hskip3pt}
    D{.}{.}{2}c@{\hskip3pt}D{.}{.}{2}c @{\hskip3pt}c@{\hskip3pt}
    D{.}{.}{2}c@{\hskip3pt}D{.}{.}{2}c @{\hskip3pt}c@{\hskip3pt}
    D{.}{.}{2}c@{\hskip3pt}D{.}{.}{2}c @{\hskip3pt}c@{\hskip3pt}
    D{.}{.}{2}c@{\hskip3pt}D{.}{.}{2}c
    }
          &  & \multicolumn{1}{ c }{}
          &
          & \multicolumn{4}{c}{BPIC 2012}
          &
          & \multicolumn{4}{c}{BPIC 2013}
          & 
          & \multicolumn{4}{c}{JUnit 4.12} 
          &
          & \multicolumn{4}{c}{Crypto 1.0.0}
          & 
          & \multicolumn{4}{c}{NASA CEV} \\
          \cmidrule{5-8}
          \cmidrule{10-13}
          \cmidrule{15-18}
          \cmidrule{20-23}
          \cmidrule{25-28}
          & Algorithm (Heuristic) & Paths 
          &
          & \multicolumn{2}{c}{\scriptsize{Fitness}}
          & \multicolumn{2}{c}{\scriptsize{Precision}} 
          &
          & \multicolumn{2}{c}{\scriptsize{Fitness}}
          & \multicolumn{2}{c}{\scriptsize{Precision}}
          &
          & \multicolumn{2}{c}{\scriptsize{Fitness}}
          & \multicolumn{2}{c}{\scriptsize{Precision}}
          &
          & \multicolumn{2}{c}{\scriptsize{Fitness}}
          & \multicolumn{2}{c}{\scriptsize{Precision}}
          &
          & \multicolumn{2}{c}{\scriptsize{Fitness}}
          & \multicolumn{2}{c}{\scriptsize{Precision}} \\
          \midrule
    \cite{aalst2004workflow-alpha} & Alpha miner &       &       
    & \evalU & \plotquality{(0,19.5) (0.7201,18.5) (0,17.5) (0.7393,16.5) (0,15.5) (0,14.5) (0,13.5) (0,12.5) (0,11.5) (0,10.5) (0,9.5) (1,8.5) (0.9768,7.5) (1,6) (0.9768,5) (0,4) (0,3) (0,2) (0,1) } 
    & \evalU & \plotquality{(0,19.5) (0.9456,18.5) (0,17.5) (0.2844,16.5) (0,15.5) (0,14.5) (0,13.5) (0,12.5) (0,11.5) (0,10.5) (0,9.5) (0.3669,8.5) (0.4936,7.5) (0.3669,6) (0.4936,5) (0,4) (0,3) (0,2) (0,1) } 
    &       
    & 0.36  & \plotquality{(0.3565,19.5) (0,18.5) (1,17.5) (0.9493,16.5) (0,15.5) (0,14.5) (0,13.5) (0,12.5) (0,11.5) (0,10.5) (0,9.5) (1,8.5) (0.9478,7.5) (1,6) (0.9478,5) (0,4) (0,3) (0,2) (0,1) } 
    & 0.88  & \plotquality{(0.8826,19.5) (0,18.5) (0.3631,17.5) (0.45,16.5) (0,15.5) (0,14.5) (0,13.5) (0,12.5) (0,11.5) (0,10.5) (0,9.5) (0.6172,8.5) (0.6377,7.5) (0.6172,6) (0.6377,5) (0,4) (0,3) (0,2) (0,1) } 
    &       
    & \evalU & \plotquality{(0,19.5) (0,18.5) (0,17.5) (0,16.5) (0,15.5) (0.4828,14.5) (0.1262,13.5) (0,12.5) (0.4297,11.5) (0.1239,10.5) (0,9.5) (1,8.5) (0.8964,7.5) (1,6) (0.8964,5) (1,4) (0.8972,3) (1,2) (0.8873,1) } 
    & \evalU & \plotquality{(0,19.5) (0,18.5) (0,17.5) (0,16.5) (0,15.5) (0.1676,14.5) (0.0568,13.5) (0,12.5) (0.1632,11.5) (0.0564,10.5) (0,9.5) (0.335,8.5) (0.3002,7.5) (0.335,6) (0.3002,5) (0.844,4) (0.8667,3) (0.827,2) (0.8449,1) }
    &       
    & \evalU & \plotquality{(0,19.5) (0,18.5) (0,17.5) (0,16.5) (0,15.5) (0,14.5) (0,13.5) (0,12.5) (0,11.5) (0,10.5) (0,9.5) (1,8.5) (0.8772,7.5) (1,6) (0.8772,5) (0.998,4) (0.9879,3) (0.998,2) (0.9879,1) } 
    & \evalU & \plotquality{(0,19.5) (0,18.5) (0,17.5) (0,16.5) (0,15.5) (0,14.5) (0,13.5) (0,12.5) (0,11.5) (0,10.5) (0,9.5) (0.3548,8.5) (0.4135,7.5) (0.3548,6) (0.4135,5) (0.4532,4) (0.4514,3) (0.4532,2) (0.4514,1) } 
    &       
    & 0.91  & \plotquality{(0.9143,19.5) (0,18.5) (0,17.5) (0,16.5) (0.7909,15.5) (0.8112,14.5) (0,13.5) (0,12.5) (0,11.5) (0,10.5) (0,9.5) (1,8.5) (0.9141,7.5) (1,6) (0.9141,5) (1,4) (0.9999,3) (1,2) (0.9999,1) } 
    & 0.06  & \plotquality{(0.06,19.5) (0,18.5) (0,17.5) (0,16.5) (0.4365,15.5) (0.4535,14.5) (0,13.5) (0,12.5) (0,11.5) (0,10.5) (0,9.5) (0.5532,8.5) (0.5263,7.5) (0.5532,6) (0.5263,5) (0.8031,4) (0.8092,3) (0.8031,2) (0.8092,1) } \\
    \cite{Weijters2011-fhm} & Heuristics &       &       & 0.72  &       & 0.95  &       &       & \evalU &       & \evalU &       &       & \evalU &       & \evalU &       &       & \evalN &       & \evalN &       &       & \evalU &       & \evalU &  \\
    \cite{vanderWerf2008-ilp} & ILP miner &       &       & \evalN &       & \evalN &       &       & 1.00  &       & 0.36  &       &       & \evalN &       & \evalN &       &       & \evalN &       & \evalN &       &       & \evalN &       & \evalN &  \\
    \cite{zelst2015-ilp-filter} & ILP, filtering &       &       & 0.74  &       & 0.28  &       &       & 0.95  &       & 0.45  &       &       & \evalN &       & \evalN &       &       & \evalN &       & \evalN &       &       & \evalN &       & \evalN &  \\
    \cite{Walkinshaw2016mint} & MINT, redblue, k=1 &       &       & \evalN &       & \evalN &       &       & \evalN &       & \evalN &       &       & 0.00  &       & \evalR &       &       & \evalN &       & \evalN &       &       & 0.79  &       & 0.44  &  \\
    \cite{Walkinshaw2016mint} & MINT, redblue, k=2 &       &       & \evalN &       & \evalN &       &       & \evalN &       & \evalN &       &       & 0.48  &       & 0.17  &       &       & \evalN &       & \evalN &       &       & 0.81  &       & 0.45  &  \\
    \cite{Walkinshaw2016mint} & MINT, redblue, k=3 &       &       & \evalN &       & \evalN &       &       & \evalN &       & \evalN &       &       & 0.13  &       & 0.06  &       &       & \evalN &       & \evalN &       &       & \evalN &       & \evalN &  \\
    \cite{Walkinshaw2016mint} & MINT, ktails, k=1 &       &       & \evalN &       & \evalN &       &       & \evalN &       & \evalN &       &       & 0.00  &       & \evalR &       &       & \evalN &       & \evalN &       &       & \evalN &       & \evalN &  \\
    \cite{Walkinshaw2016mint} & MINT, ktails, k=2 &       &       & \evalN &       & \evalN &       &       & \evalN &       & \evalN &       &       & 0.43  &       & 0.16  &       &       & \evalN &       & \evalN &       &       & \evalN &       & \evalN &  \\
    \cite{Walkinshaw2016mint} & MINT, ktails, k=3 &       &       & \evalN &       & \evalN &       &       & \evalN &       & \evalN &       &       & 0.12  &       & 0.06  &       &       & \evalN &       & \evalN &       &       & \evalN &       & \evalN &  \\
    \cite{Beschastnikh2011synoptic} & Synoptic &       &       & \evalN &       & \evalN &       &       & \evalN &       & \evalN &       &       & \evalN &       & \evalN &       &       & \evalN &       & \evalN &       &       & \evalN &       & \evalN &  \\
    \cite{sleemans-thesis} & IM (baseline) & 1.0   &       & 1.00  &       & 0.37  &       &       & 1.00  &       & 0.62  &       &       & 1.00  &       & 0.34  &       &       & 1.00  &       & 0.35  &       &       & 1.00  &       & 0.55  &  \\
    \cite{sleemans-thesis} & IM (baseline) & 0.8   &       & 0.98  &       & 0.49  &       &       & 0.95  &       & 0.64  &       &       & 0.90  &       & 0.30  &       &       & 0.88  &       & 0.41  &       &       & 0.91  &       & 0.53  &  \\
    \midrule
    \multirow{6}[0]{*}{\rotatebox[origin=c]{90}{\textbf{Our techniques}}}
          & Na\"ive (no heuristic) & 1.0   &       & 1.00  &       & 0.37  &       &       & 1.00  &       & 0.62  &       &       & 1.00  &       & 0.34  &       &       & 1.00  &       & 0.35  &       &       & 1.00  &       & 0.55  &  \\
          & Na\"ive (no heuristic) & 0.8   &       & 0.98  &       & 0.49  &       &       & 0.95  &       & 0.64  &       &       & 0.90  &       & 0.30  &       &       & 0.88  &       & 0.41  &       &       & 0.91  &       & 0.53  &  \\
          & Na\"ive (Nested Calls) & 1.0   &       & \evalN &       & \evalN &       &       & \evalN &       & \evalN &       &       & 1.00  &       & 0.84  &       &       & 1.00  &       & 0.45  &       &       & 1.00  &       & 0.80  &  \\
          & Na\"ive (Nested Calls) & 0.8   &       & \evalN &       & \evalN &       &       & \evalN &       & \evalN &       &       & 0.90  &       & 0.87  &       &       & 0.99  &       & 0.45  &       &       & 1.00  &       & 0.81  &  \\
          & RAD (Nested Calls) & 1.0   &       & \evalN &       & \evalN &       &       & \evalN &       & \evalN &       &       & 1.00  &       & 0.83  &       &       & 1.00  &       & 0.45  &       &       & 1.00  &       & 0.80  &  \\
          & RAD (Nested Calls) & 0.8   &       & \evalN &       & \evalN &       &       & \evalN &       & \evalN &       &       & 0.89  &       & 0.84  &       &       & 0.99  &       & 0.45  &       &       & 1.00  &       & 0.81  &  \\
    \end{tabular}%
    \vspace*{-1.5em}
    \begin{tablenotes}%
    \begin{multicols}{4}
    \item[T] Time limit exceeded (5 min.)
    \item[R] Not reliable (fitness = 0)
    \item[U] Unsound model
    \item[n/a] No model (see Table~\ref{tab:eval2:results:time})
     \end{multicols}
    \end{tablenotes}%
  \label{tab:eval2:results:quality}%
  \end{threeparttable}%
}%
  \vspace*{-2.3em}
\end{table*}%

\subsubsection{Model Quality Analysis}
In Table~\ref{tab:eval2:results:quality}, the results of the model quality measurements are given.
The Fuzzy miner is absent due to the lack of semantics for fuzzy models.
For the Structured Names heuristic, there was no trivial event mapping for the quality score measurement.

Compare the quality scores between the \emph{Nested Calls} setup with \emph{No heuristics} and the related work.
Note that in all cases, the Nested Calls setup yields a big improvement in precision, with no significant impact on fitness.
The drop in quality for the JUnit - RAD case is due to the limitations in translation of the recursive structure to the input format of~\cite{adriansyah2014aligning}.
In addition, our technique without heuristics, with paths at 1.0, maintains the model quality guarantees (perfect fitness).
Overall, we can conclude that the added expressiveness of modeling the hierarchy have a positive impact on the model quality.


\begin{figure}[ht!]
    \centering
    \includegraphics[width=0.48\textwidth]{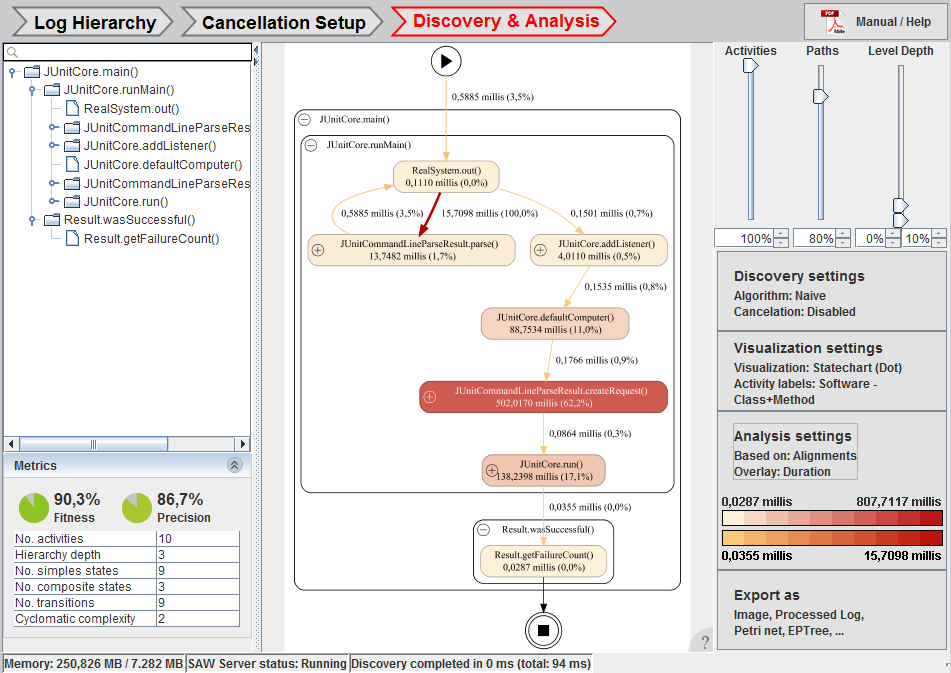}
    \vspace*{-1.5ex}
    \caption{Screenshot of the workbench tool implementation \cite{prom:statechart}}
    \label{fig:demo}
    \vspace*{-2.5ex}
\end{figure}

\subsection{Using the Tool}
\label{sec:demo}

In Figure~\ref{fig:demo}, we show some of the functionality and solutions available via the workbench tool implementation in~ProM~\cite{prom:statechart}.
Via the user interface, the analyst first selects the desired heuristic for hierarchy.
Afterwards, the analysis is presented through a Statechart visualization of the discovered model, annotated with frequency information.
With the sliders on the right and the tree view and search box on the left, the analyst can adjust parameters in real time and interactively explore the model.
The analyst can switch visualizations directly in the workbench UI, and view the model as, for example, a sequence diagram.
Thanks to a corresponding Eclipse plugin, the model and analysis results can be linked back to the source code.
A simple double-click on the model allows for a jump to the corresponding source code location.
In addition, we can overlay the code in Eclipse with our model analysis results.


\section{Conclusion and Future Work}
\label{sec:conclusion}

In this paper, \emph{we 1) proposed a novel hierarchy and recursion extension to the process tree model; and 2) defined the first, recursion aware process model discovery technique that leverages hierarchical information in event logs, typically available for software system cases.}
This technique allows us to analyze the operational processes of software systems under real-life conditions at multiple levels of granularity.
An implementation  of the proposed algorithm is made available via the Statechart plugin in the process mining framework ProM~\cite{prom:statechart}.
The Statechart workbench provides an intuitive way to discover, explore and analyze hierarchical behavior, integrates with existing ProM plugins and links back to the source code in Eclipse.
Our experimental results, based on real-life (software) event logs, demonstrate the feasibility and usefulness of the approach and \emph{show the huge potential to speed up discovery by exploiting the available hierarchy}.

Future work aims to uncover exceptional and error control flows (i.e., try-catch and cancellation patterns), provide reliability analysis, and better support multi-threaded and distributed software.
In addition, enabling the proposed techniques in a streaming context could provide valuable real-time insight into software in its natural environment.
Furthermore, since (software) event logs can be very large, using a streaming context means we do not have to use a large amount of storage.

\bibliographystyle{plain}
\bibliography{ref}

\begin{thebibliography}{10}

\bibitem{ackermann2009recovering}
C.~Ackermann, M.~Lindvall, and R.~Cleaveland.
\newblock Recovering views of inter-system interaction behaviors.
\newblock In {\em Reverse Engineering, 2009. WCRE '09. 16th Working Conference
  on}, pages 53--61. IEEE, Oct 2009.

\bibitem{adriansyah2014aligning}
A.~Adriansyah.
\newblock {\em Aligning observed and modeled behavior}.
\newblock PhD thesis, Technische Universiteit Eindhoven, 2014.

\bibitem{alalfi2009automated}
M.~H. Alalfi, J.~R. Cordy, and T.~R. Dean.
\newblock Automated reverse engineering of {UML} sequence diagrams for dynamic
  web applications.
\newblock In {\em Software Testing, Verification and Validation Workshops,
  2009. ICSTW '09. International Conference on}, pages 287--294. IEEE, April
  2009.

\bibitem{Amighi2012cfgjava}
A.~Amighi, P.~de~C.~Gomes, D.~Gurov, and M.~Huisman.
\newblock Sound control-flow graph extraction for {Java} programs with
  exceptions.
\newblock In {\em Proceedings of the 10th International Conference on Software
  Engineering and Formal Methods}, SEFM'12, pages 33--47, Berlin, Heidelberg,
  2012. Springer-Verlag.

\bibitem{sw:apache-crypto}
{Apache Commons Documentation Team}.
\newblock {Apache Commons Crypto 1.0.0-src}.
\newblock \url{http://commons.apache.org/proper/commons-crypto}.
\newblock [Online, accessed 3 March 2017].

\bibitem{sw:synoptic}
I.~Beschastnikh.
\newblock {Synoptic Model Inference}.
\newblock \url{https://github.com/ModelInference/synoptic}.
\newblock [Online, accessed 31 August 2017].

\bibitem{Beschastnikh2011synoptic}
I.~Beschastnikh, J.~Abrahamson, Y.~Brun, and M.~D. Ernst.
\newblock Synoptic: Studying logged behavior with inferred models.
\newblock In {\em Proceedings of the 19th ACM SIGSOFT Symposium and the 13th
  European Conference on Foundations of Software Engineering}, ESEC/FSE '11,
  pages 448--451, New York, NY, USA, 2011. ACM.

\bibitem{beschastnikh2014inferring}
I.~Beschastnikh, Y.~Brun, M.~D. Ernst, and A.~Krishnamurthy.
\newblock Inferring models of concurrent systems from logs of their behavior
  with {CSight}.
\newblock In {\em Proceedings of the 36th International Conference on Software
  Engineering}, ICSE 2014, pages 468--479, New York, NY, USA, 2014. ACM.

\bibitem{sw:junit-getting-started}
S.~Birkner, E.~Gamma, and K.~Beck.
\newblock {JUnit 4 - Getting Started}.
\newblock \url{https://github.com/junit-team/junit4/wiki/Getting-started}.
\newblock [Online, accessed 19 July 2016].

\bibitem{briand2006toward}
L.~C. Briand, Y.~Labiche, and J.~Leduc.
\newblock Toward the reverse engineering of {UML} sequence diagrams for
  distributed {Java} software.
\newblock {\em Software Engineering, IEEE Transactions on}, 32(9):642--663,
  Sept 2006.

\bibitem{briand2003towards}
L.~C. Briand, Y.~Labiche, and Y.~Miao.
\newblock Towards the reverse engineering of {UML} sequence diagrams.
\newblock {\em 2013 20th Working Conference on Reverse Engineering (WCRE)},
  page~57, 2003.

\bibitem{buijs2012role}
J.~C. A.~M. Buijs, B.~F. van Dongen, and W.~M.~P. van~der Aalst.
\newblock On the role of fitness, precision, generalization and simplicity in
  process discovery.
\newblock In {\em On the Move to Meaningful Internet Systems: OTM 2012}, volume
  7565 of {\em Lecture Notes in Computer Science}, pages 305--322. Springer
  Berlin Heidelberg, 2012.

\bibitem{chiba1998javassist}
S.~Chiba.
\newblock Javassist -- a reflection-based programming wizard for {Java}.
\newblock In {\em Proceedings of OOPSLA'98 Workshop on Reflective Programming
  in C++ and {Java}}, page~5, October 1998.

\bibitem{chiba200loadtime-javassist}
S.~Chiba.
\newblock Load-time structural reflection in {Java}.
\newblock In E.~Bertino, editor, {\em European Conference on Object-Oriented
  Programming 2000 -- Object-Oriented Programming}, volume 1850 of {\em Lecture
  Notes in Computer Science}, pages 313--336. Springer Berlin Heidelberg, 2000.

\bibitem{conforti2016bpmnminer}
R.~Conforti, M.~Dumas, L.~Garc\'ia-Ba\~nuelos, and M.~La~Rosa.
\newblock {BPMN} miner: Automated discovery of {BPMN} process models with
  hierarchical structure.
\newblock {\em Information Systems}, 56:284 -- 303, 2016.

\bibitem{cordy2006txl}
J.~R. Cordy.
\newblock The {TXL} source transformation language.
\newblock {\em Science of Computer Programming}, 61(3):190 -- 210, 2006.
\newblock Special Issue on The Fourth Workshop on Language Descriptions, Tools,
  and Applications (LDTA '04).

\bibitem{DePauw1998execpattern}
W.~{De Pauw}, D.~Lorenz, J.~Vlissides, and M.~Wegman.
\newblock {Execution Patterns in Object-oriented Visualization}.
\newblock {\em Proc. COOTS}, pages 219--234, 1998.

\bibitem{sw:junit412}
E.~Gamma and K.~Beck.
\newblock {JUnit 4.12}.
\newblock \url{https://mvnrepository.com/artifact/junit/junit/4.12}.
\newblock [Online, accessed 19 July 2016].

\bibitem{gradecki2003mastering}
J.~D. Gradecki and N.~Lesiecki.
\newblock {\em Mastering {AspectJ}. Aspect-Oriented Programming in {Java}},
  volume 456.
\newblock John Wiley \& Sons, 2003.

\bibitem{Graham1982gprof}
S.~L. Graham, P.~B. Kessler, and M.~K. Mckusick.
\newblock gprof: call graph execution profiler.
\newblock In {\em Proceedings of the 1982 SIGPLAN symposium on Compiler
  construction - SIGPLAN '82}, pages 120--126, New York, New York, USA, 1982.
  ACM Press.

\bibitem{gunther2007fuzzy}
C.~W. G{\"u}nther and W.~M.~P. van~der Aalst.
\newblock Fuzzy mining -- adaptive process simplification based on
  multi-perspective metrics.
\newblock In {\em Business Process Management}, pages 328--343. Springer, 2007.

\bibitem{gunther2014xes}
C.~W. G{\"u}nther and H.~M.~W. Verbeek.
\newblock {XES} -- standard definition.
\newblock Technical Report BPM reports 1409, Eindhoven University of
  Technology, 2014.

\bibitem{Heule2010}
M.~J.~H. Heule and S.~Verwer.
\newblock {\em Exact DFA Identification Using SAT Solvers}, pages 66--79.
\newblock Springer Berlin Heidelberg, Berlin, Heidelberg, 2010.

\bibitem{Bose2009AbstractTax}
R.~P. Jagadeesh Chandra~Bose and W.~M.~P. van~der Aalst.
\newblock {\em Abstractions in Process Mining: A Taxonomy of Patterns}, volume
  5701 of {\em Lecture Notes in Computer Science}, pages 159--175.
\newblock Springer Berlin Heidelberg, Berlin, Heidelberg, 2009.

\bibitem{Bose2012HPM}
R.~P. Jagadeesh Chandra~Bose, E.~H. M.~W. Verbeek, and W.~M.~P. van~der Aalst.
\newblock {\em Discovering Hierarchical Process Models Using ProM}, volume 107
  of {\em Lecture Notes in Business Information Processing}, pages 33--48.
\newblock Springer Berlin Heidelberg, Berlin, Heidelberg, 2012.

\bibitem{Jonyer2004subduegl}
I.~Jonyer, L.~B. Holder, and D.~J. Cook.
\newblock {MDL-Based Context-Free Graph Grammar Induction}.
\newblock {\em International Journal on Artificial Intelligence Tools},
  13(01):65--79, mar 2004.

\bibitem{kollmann2001collabdiag}
R.~Kollmann and M.~Gogolla.
\newblock Capturing dynamic program behaviour with {UML} collaboration
  diagrams.
\newblock In {\em Software Maintenance and Reengineering, 2001. Fifth European
  Conference on}, pages 58--67, 2001.

\bibitem{korshunova2006cpp2xmi}
E.~Korshunova, M.~Petkovic, M.~G.~J. van~den Brand, and M.~R. Mousavi.
\newblock {CPP2XMI}: Reverse engineering of {UML} class, sequence, and activity
  diagrams from {C++} source code.
\newblock In {\em 2006 13th Working Conference on Reverse Engineering}, pages
  297--298, Oct 2006.

\bibitem{labiche2013combining}
Y.~Labiche, B.~Kolbah, and H.~Mehrfard.
\newblock Combining static and dynamic analyses to reverse-engineer scenario
  diagrams.
\newblock In {\em Software Maintenance (ICSM), 2013 29th IEEE International
  Conference on}, pages 130--139. IEEE, Sept 2013.

\bibitem{prom:statechart}
M.~Leemans.
\newblock {Statechart plugin for ProM 6}.
\newblock \url{https://svn.win.tue.nl/repos/prom/Packages/Statechart/}.
\newblock [Online, accessed 15 July 2016].

\bibitem{xeslog:junit412}
M.~Leemans.
\newblock {JUnit} 4.12 software event log.
\newblock
  \url{http://doi.org/10.4121/uuid:cfed8007-91c8-4b12-98d8-f233e5cd25bb}, 2016.

\bibitem{xeslog:apache-crypto}
M.~Leemans.
\newblock {A}pache {C}ommons {C}rypto 1.0.0 - {S}tream {CbcNopad} unit test
  software event log.
\newblock
  \url{http://doi.org/10.4121/uuid:bb3286d6-dde1-4e74-9a64-fd4e32f10677}, 2017.

\bibitem{xeslog:nasa-cev}
M.~Leemans.
\newblock {NASA} {C}rew {E}xploration {V}ehicle ({CEV}) software event log.
\newblock
  \url{http://doi.org/10.4121/uuid:60383406-ffcd-441f-aa5e-4ec763426b76}, 2017.

\bibitem{leemans2015models}
M.~Leemans and W.~M.~P. van~der Aalst.
\newblock Process mining in software systems: Discovering real-life business
  transactions and process models from distributed systems.
\newblock In {\em Model Driven Engineering Languages and Systems (MODELS), 2015
  ACM/IEEE 18th International Conference on}, pages 44--53, Sept 2015.

\bibitem{sleemans-thesis}
S.~J.~J. Leemans.
\newblock {\em Robust process mining with guarantees}.
\newblock PhD thesis, Eindhoven University of Technology, May 2017.

\bibitem{sleemans2013-im}
S.~J.~J. Leemans, D.~Fahland, and W.~M.~P. van~der Aalst.
\newblock Discovering block-structured process models from event logs - a
  constructive approach.
\newblock In J.~M. Colom and J.~Desel, editors, {\em Application and Theory of
  Petri Nets and Concurrency}, pages 311--329. Springer Berlin Heidelberg,
  2013.

\bibitem{sw:nasa-cev}
{NASA}.
\newblock {JPF Statechart and CEV example}.
\newblock
  \url{http://babelfish.arc.nasa.gov/trac/jpf/wiki/projects/jpf-statechart}.
\newblock [Online, accessed 3 March 2017].

\bibitem{NevillManning1997sequitur}
C.~G. Nevill-Manning and I.~H. Witten.
\newblock {Identifying Hierarchical Structure in Sequences: A linear-time
  algorithm}.
\newblock {\em Journal of Artificial Intelligence Research}, 7:67--82, aug
  1997.

\bibitem{oechsle2002javavis}
R.~Oechsle and T.~Schmitt.
\newblock {JAVAVIS}: Automatic program visualization with object and sequence
  diagrams using the {Java Debug Interface (JDI)}.
\newblock In S.~Diehl, editor, {\em Software Visualization}, volume 2269 of
  {\em Lecture Notes in Computer Science}, pages 176--190. Springer Berlin
  Heidelberg, 2002.

\bibitem{Rountev2005staticsd}
A.~Rountev and B.~H. Connell.
\newblock Object naming analysis for reverse-engineered sequence diagrams.
\newblock In {\em Proceedings of the 27th International Conference on Software
  Engineering}, ICSE '05, pages 254--263, New York, NY, USA, 2005. ACM.

\bibitem{Siyari2016lexis}
P.~Siyari, B.~Dilkina, and C.~Dovrolis.
\newblock {Lexis: An Optimization Framework for Discovering the Hierarchical
  Structure of Sequential Data}.
\newblock {\em Gecco}, pages 421--434, feb 2016.

\bibitem{spinczyk2002aspectcpp}
O.~Spinczyk, A.~Gal, and W.~Schr\"{o}der-Preikschat.
\newblock {AspectC++}: An aspect-oriented extension to the {C++} programming
  language.
\newblock In {\em Proceedings of the Fortieth International Conference on Tools
  Pacific: Objects for Internet, Mobile and Embedded Applications}, CRPIT '02,
  pages 53--60, Darlinghurst, Australia, Australia, 2002. Australian Computer
  Society, Inc.

\bibitem{xeslog:bpic2013}
W.~Steeman.
\newblock {BPI Challenge 2013, incidents}.
\newblock
  \url{http://dx.doi.org/10.4121/uuid:500573e6-accc-4b0c-9576-aa5468b10cee},
  2013.

\bibitem{systa2001shimba}
T.~Syst{\"a}, K.~Koskimies, and H.~M{\"u}ller.
\newblock Shimba -- an environment for reverse engineering {Java} software
  systems.
\newblock {\em Software: Practice and Experience}, 31(4):371--394, 2001.

\bibitem{tonella2003interact-c}
P.~Tonella and A.~Potrich.
\newblock Reverse engineering of the interaction diagrams from {C++} code.
\newblock In {\em Software Maintenance, 2003. ICSM 2003. Proceedings.
  International Conference on}, pages 159--168, Sept 2003.

\bibitem{processmining}
W.~M.~P. van~der Aalst.
\newblock {\em {Process Mining: Data Science in Action}}.
\newblock Springer-Verlag, Berlin, 2016.

\bibitem{aalst2005genetic}
W.~M.~P. van~der Aalst, A.~K.~Alves de~Medeiros, and A.~J. M.~M. Weijters.
\newblock {Genetic Process Mining}.
\newblock In G.~Ciardo and P.~Darondeau, editors, {\em {Applications and Theory
  of Petri Nets 2005}}, volume 3536 of {\em Lecture Notes in Computer Science},
  pages 48--69. Springer-Verlag, Berlin, 2005.

\bibitem{aalst2010process}
W.~M.~P. van~der Aalst, V.~Rubin, H.~M.~W. Verbeek, B.~F. van Dongen,
  E.~Kindler, and C.~W. G{\"u}nther.
\newblock Process mining: a two-step approach to balance between underfitting
  and overfitting.
\newblock {\em Software \& Systems Modeling}, 9(1):87--111, 2010.

\bibitem{aalst2004workflow-alpha}
W.~M.~P. van~der Aalst, A.~J. M.~M. Weijters, and L.~Maruster.
\newblock {Workflow Mining: Discovering Process Models from Event Logs}.
\newblock {\em IEEE Transactions on Knowledge and Data Engineering},
  16(9):1128--1142, 2004.

\bibitem{vanderWerf2008-ilp}
J.~M. E.~M. van~der Werf, B.~F. van Dongen, C.~A.~J. Hurkens, and
  A.~Serebrenik.
\newblock {\em Process Discovery Using Integer Linear Programming}, pages
  368--387.
\newblock Springer Berlin Heidelberg, Berlin, Heidelberg, 2008.

\bibitem{xeslog:bpic2012}
B.F. van Dongen.
\newblock {BPI Challenge 2012}.
\newblock
  \url{http://dx.doi.org/10.4121/uuid:3926db30-f712-4394-aebc-75976070e91f},
  2012.

\bibitem{zelst2015-ilp-filter}
S.~J. van Zelst, B.~F. van Dongen, and W.~M.~P. van~der Aalst.
\newblock {\em Avoiding Over-Fitting in ILP-Based Process Discovery}, pages
  163--171.
\newblock Springer International Publishing, Cham, 2015.

\bibitem{verbeek2011xes}
H.~M.~W. Verbeek, J.~C. A.~M. Buijs, B.~F. van Dongen, and W.~M.~P. van~der
  Aalst.
\newblock {XES, XESame, and ProM 6}.
\newblock In P.~Soffer and E.~Proper, editors, {\em Information Systems
  Evolution}, volume~72 of {\em Lecture Notes in Business Information
  Processing}, pages 60--75. Springer Berlin Heidelberg, 2011.

\bibitem{sw:efsmi-mint}
N.~Walkinshaw.
\newblock {EFSMInferenceTool}.
\newblock \url{https://bitbucket.org/nwalkinshaw/efsminferencetool}.
\newblock [Online, accessed 19 July 2016].

\bibitem{Walkinshaw2016mint}
N.~Walkinshaw, R.~Taylor, and J.~Derrick.
\newblock Inferring extended finite state machine models from software
  executions.
\newblock {\em Empirical Software Engineering}, 21(3):811--853, 2016.

\bibitem{Weijters2011-fhm}
A.~J. M.~M. Weijters and J.~T.~S. Ribeiro.
\newblock {Flexible Heuristics Miner (FHM)}.
\newblock In {\em 2011 IEEE Symposium on Computational Intelligence and Data
  Mining (CIDM)}, pages 310--317, April 2011.

\bibitem{yao2005aoppp}
Z.~Yao, Q.~Zheng, and G.~Chen.
\newblock {AOP++}: A generic aspect-oriented programming framework in {C++}.
\newblock In R.~Gl\"{u}ck and M.~Lowry, editors, {\em Generative Programming
  and Component Engineering}, volume 3676 of {\em Lecture Notes in Computer
  Science}, pages 94--108. Springer Berlin Heidelberg, 2005.

\end{thebibliography}


\end{document}